\newenvironment{proof}[1][Proof]{\noindent\textit{#1.} }{\ \rule{0.5em}{0.5em}\par}
\newcommand{\ed}{{\rm e}}
\newtheorem{theorem}{Theorem}[section]
\newtheorem{lemma}[theorem]{Lemma}
\newtheorem{corollary}[theorem]{Corollary}
\newcommand{\myl}[1]{
\hspace{-1mm}
\left(\vbox to #1pt{}\right.
\hspace{-1mm}
}
\newcommand{\my}[2]{
\left#1\vbox to #2pt{}\right.
}
\newcommand{\myr}[1]{
\hspace{-1mm}
\left.\vbox to #1pt{}\right)
\hspace{-1mm}
}
\begin{document}
\title{The QED $\beta$-function from global solutions to Dyson-Schwinger equations}

\author{Guillaume van Baalen\footnote{Department of Mathematics and
    Statistics, Boston University, 111 Cummington Street, Boston MA,
    02215, USA}, Dirk Kreimer\footnote{CNRS-IHES 91440 Bures sur
    Yvette, France; and Center for Mathematical Physics Boston
    University, 111 Cummington Street, Boston MA,
    02215, USA}, David Uminsky\footnotemark[1], and Karen
  Yeats\footnotemark[1]}
\maketitle
\begin{abstract}
We discuss the structure of beta functions as determined by 
the recursive nature of Dyson--Schwinger equations turned into an 
analysis of ordinary differential equations, with particular emphasis 
given to quantum electrodynamics.
In particular we determine when a separatrix for  solutions to 
such ODEs exists and clarify the existence of
Landau poles beyond perturbation theory.  Both are determined in terms
of explicit conditions on the asymptotics 
for the growth of skeleton graphs.
\end{abstract}
\section*{Acknowledgments} G.vB. and D.U. would like to thank C. Eugene
Wayne for
insightful conversations.  D.K. and K.Y. would like to thank Christoph
Bergbauer, David Broadhurst, Ivan Todorov, and Spencer Bloch.  K.Y.
would also like to thank
Cameron Morland for for help visualizing the problem.  D.K. and
K.Y. are supported by NSF grant DMS-0603781.  D.U. is supported by NDF
grant DMS-0405724.
\section{Introduction}
\subsection{The method}
Results on the structure of amplitudes in the theory of local
interacting quantum fields are notoriously hard to come by beyond
perturbation theory. We refrain from discussing the various
approaches developed in the past and shortly summarize our
approach here, which has been developed by one of us (D.K.) in
the last decade
\cite{BDK,habil,hopf,ckII,bkerfc,bergk,etude,radii,tor}. 
It lead already to progress at very high orders \cite{bk30,bellonI, bellonII} 
and all orders of perturbation theory \cite{bkerfc} (see also the
$P=x$, $s=2$ 
case in the examples below).

It is a
pleasure to emphasize that our approach connects to old attempts
\cite{MT} in quantum field theory to use the soft breaking of
conformal symmetry by renormalizable quantum fields for
non-perturbative results. The recent developments which allow us
to understand the notion of locality mathematically combine
rather nicely with such ideas. A crucial ingredient is that the
mathematical structure of the quantum equations of motion remains
form-invariant under inclusion of more and more skeleton graphs,
and this fact allows the development of an approximation to these
equations in terms of periods of increasing complexity, without
ever changing the structure of these equations. This is very
different from, for example, any truncation of high frequency
modes in the path-integral. Whilst the approach used here can
re-derive results of such constructive methods \cite{radii}, we
here go beyond what is possible by such truncations of the
path-integral.

In particular, for theories which are non-asymptotically free, a 
study of low orders of perturbation theory indicates the presence of a 
Landau pole (the invariant charge approaching infinity at a finite scale 
$q^2/\mu^2$), which is also believed to exist for such theories in the 
constructive approach, if one attempts to remove the cut-off which 
necessarily has to be introduced in such theories, as well as in 
perturbation theory. In our approach, we only choose a boundary 
condition for the equation of motion, the Dyson--Schwinger equations. We 
approximate the full theory by the choice of a function $P(x)$ which 
describes the growth of the skeleton expansion, and carefully make that 
choice to maintain the Lie- and Hopf-algebraic structure of the forest 
formula and the equations of motion at the same time. We thus do not 
need to introduce a cut-off, a familiar phenomenon when scaling 
dimensions of Green functions are taken into account in those equations 
\cite{MT}.  Perturbative approximations to $P(x)$ lead to a 
non-perturbative behavior for $\beta$ functions in such theories which 
reconfirms the existence of a Landau pole. Rather mild assumptions on 
the non-perturbative behavior of $P(x)$ allow for solutions though which 
avoid such a pole, as discussed below, with the charge going to
infinity 
only at infinite scale, and hence realizing a possibility already 
discussed in \cite{Weinberg} section 18.3. Finally, we emphasize that we assume below 
that $P(x)$ is a nowhere vanishing function, and hence that we do not 
have a non-trivial zero for the $\beta$-function for a 
non-asymptotically free theory: so we are not assuming an eigenvalue 
condition \cite{Adler}, but much to the contrary, analyze the structure 
of the theory under the assumption that such an eigenvalue does not 
exist.

We will not attempt any serious discussion of the asymptotics of $P(x)$,
though that the work of \cite{ipz}
emphasizes the need of such a discussion. Here, we are content with the 
classifications of the behavior of the $\beta$-function as a function 
of  the possible asymptotics of $P(x)$,
emphasizing the possibility of the absence of Landau poles in 
well-specified conditions. Also, we re-emphasize that Dyson--Schwinger 
equations do not demand the introduction of a cut-off, but rather demand 
the specification of a finite number of conditions to fix the amplitudes 
needing renormalization as initial conditions for the renormalization 
group flow.

So our approach is based on two main ingredients: the existence
of quantum equations of motion --- Dyson--Schwinger equations
---, and the consequences of the renormalization group for such
local field theories. The latter guarantee that amplitudes
develop anomalous scaling exponents under the action of the
dilatation group which re-scales parameters in the theory, the
former guarantees sufficient recursive structure in the theory
such that a non-perturbative approach becomes feasible. The rich
Hopf algebraic foundations of these phenomena make our approach
possible.

We consider Green functions as functions of two variables, a
`running coupling constant' $x$ and a single kinematical variable
$L=\ln q^2/\mu^2$ (in the deep Euclidean regime, or suitably
continued to physical regions). This implies that vertex
functions are considered only at zero momentum transfer or for
symmetric external momenta.

We define Green functions as the scalar coefficient functions
which provide quantum corrections to tree-level amplitudes
$r\in{\mathcal R}$, where $r$ denotes the chosen amplitude. We
store all the information on parameters which determine the
amplitude under consideration in its lowest order contribution,
the tree level form-factor $f(r)$. The Lagrangian is then given
as ${\mathcal L}=\sum_{r\in\mathcal R}m{r}$, for monomials $m$ in
fields for each amplitude which needs renormalization.
Green functions modify this amplitude $f(r)$ in a multiplicative
manner: $f(r)\to \Phi(r)(1+{\mathcal O}(\hbar))$, and hence start
with one.

The equivalent expansions (taking the negative sign for
propagators and the positive for vertices)
\begin{equs}
G^r( x ,L)=1\pm\sum_{j=1}^\infty \gamma_j^r( x
)L^j=1\pm\sum_{j=1}^\infty c_j^r(L) x ^j~,
\end{equs}
where $c_j^r(L)$ is a polynomial in $L$ bounded in degree by $j$
and $\gamma_j^r$ a series in $ x $, are triangular and recursive
for $\gamma^r_j$: the renormalization group determines the
$\gamma_j^r$, $j>1$ in terms of all the series
$\gamma_1^r$. We denote $\gamma_1^r$ as the anomalous dimensions
of the amplitude $r$, even if $r$ is a vertex function.
Any $r$ for a vertex amplitude corresponds to a field monomial
$m(r)=\prod_i\eta_i$ in the Lagrangian, and the $\eta_i$ are
fields which we assume to have kinetic energy. There are then
corresponding monomials $\sim \eta_i^2$ quadratic in those fields
$\eta_i$ in the Lagrangian and the corresponding
$\gamma_1^i\equiv\gamma_1^{\eta_i^2}$ combine with $\gamma_1^r$
to give the $\beta$-function in our sign conventions as
\begin{equs}
\beta^r=x[\gamma_1^r+\sum_i \gamma_1^{i}/2]~.
\end{equs}
Here, the monomial $r$, for $r$ a vertex amplitude, comes along
with a probability $x$ for the tree-level scattering process
described by $r$ to happen, and this probability --- necessarily
smaller than one ---, furnishes a natural expansion parameter for
the series $\gamma_1$. We consider only theories which have a
single vertex amplitude in this paper, and hence have a unique
expansion parameter.

Following now \cite{etude}, \cite{radii}, and \cite{kythesis} we
can reduce the Dyson--Schwinger equations to a system of
differential equations for the anomalous dimensions $\gamma_1^r$.

We outline the argument as follows. As we are interested only in
the high-energy sector of the theory, we reduce one-particle
irreducible Green functions to depend on a single scale $L = \log
q^2/\mu^2$. The coupling constant will be denoted $x$. Then we
expand the Green functions in $L$ as above.
%a
%next-to next-to $\cdots$ leading log expansion with the notation above,
%\[
%  G^r(x, L) = 1 \pm \sum \gamma^r_k L^k
%\] where $r$ will represent any amplitude which needs renormalization and
%the sign is positive for a vertex function and negative for a
%propagators, which are the inverse of one-particle irreducible Green functions.

From the renormalization group equation we obtain \cite{etude,kythesis}
\begin{equs}\label{gamma k recursion}
  \gamma^r_k(x) =-\frac{1}{k}\left( \pm\gamma^r_1(x) + \sum_{j
      \in \mathcal{R}}|s_j|\gamma^j_1(x)x
      \partial_x\right)\gamma^r_{k-1}(x)~.
\end{equs}
where again the sign is positive for a vertex and negative for a
propagator and where the $s_j$ are defined, in accordance with
the fields coupling at a vertex as above, by
\begin{equs}
\beta(x) = x\sum_{j \in \mathcal{R}} |s_j| \gamma^j_1(x)
\end{equs}
where $\beta(x)$ is said $\beta$-function of the theory, and
${\mathcal R}$ is the set of all amplitudes needing
renormalization. In the single equation case \eqref{gamma k
recursion} reads
\begin{equs}
  \gamma_k = \pm\frac{1}{k} \gamma_1(x)(1-sx
  \partial_x)\gamma_{k-1}(x)~.
\end{equs}  
As in \cite{radii} the Dyson-Schwinger equations can be rewritten
in terms of derivatives of the Mellin transforms for the
primitives. By Mellin transforms we simply mean the analytically
regularized Feynman integrals for the primitives. Again by
combinatorics on the Hopf algebra we can reduce to Mellin
transforms in a single variable $\rho$, that is to a single
insertion place. Finally by shifting unwanted powers of $\rho$
at a given loop order to lower powers of $\rho$ at a higher loop
order, as in \cite{radii}, \cite{kythesis}, we can relate the
coefficients of $L$ and $L^2$ which, in view of \eqref{gamma k recursion} gives us the system
\begin{equs}\label{system}
  \gamma_1^r(x) = P_r(x) \mp \gamma_1^r(x)^2 + \left(\sum_{j \in \mathcal{R}}|s_j|\gamma_1^j(x)\right) x \partial_x \gamma_1^r(x)
\end{equs}
as $r$ runs over $\mathcal{R}$, the residues of the theory. $P_r$
is a modified version of the values of the primitives. The
modification comes from two places. First the reduction to a
single insertion place is purely combinatorial and leads
essentially to the need to consider in the contribution to $P_r$
primitives which are not merely single graphs, but sums of graph.
Second by exchanging powers of $\rho$ for powers of the coupling
constant $x$ we also modify $P_r$. This reduction is not yet as
well understood, but none-the-less it is simply a rearrangement
of the analytic information contained in the original primitives
\cite{kythesis}.

\subsection{QED as a special case}
Most of our analysis will be confined to the case with only one
equation and $s>0$.
\begin{equs}\label{de}
  \gamma_1(x) = P(x) -\gamma_1(x)^2 + s\gamma_1(x) x \partial_x \gamma_1(x)
\end{equs}
This case is general enough to cover gauge theories for the
following reasons. First of all, we note that in QED thanks to
the Ward identity, the $\beta$-functions is computable from the
anomalous dimension of the photon $\gamma_1^{\frac{1}{4}F^2}$
alone, and similarly for non-abelian gauge theories in a
background field gauge \cite{Abbott}. Furthermore, using a
Baker--Johnson--Willey gauge \cite{jbw}, QED is a finite theory
were it not for the (gauge-invariant) photon propagator. So it is
indeed a single equation case.

In particular, it is the case $s=1$, reflecting the fact that the
lowest order term in the $\beta$-function comes from a graph
which itself has no internal photon propagation. The variable $s$
measures the power of the Green function appearing in the
invariant charge, or the power of the nonlinear part of the
recursive appearance of the Green function in the Dyson-Schwinger
equation. More specifically the power of the recursive appearance
of the Green function at loop order $k$ is $1-sk$. So with $s=1$
and $k=1$ we the power is $0$ representing the fact that in QED
the one loop photon graph doesn't have an internal photon edge
and the correct counting continues to hold at higher loop orders.
For the Yukawa theory example of \cite{bkerfc} the Green function
appears recursively with power $-1$ at $k=1$ leading to $s=2$.

In the general non-abelian case the co-ideal and Hochschild
cohomology structure of the Hopf algebra underlying the expansion
of a non-abelian gauge theory in the coupling \cite{anatomy,vS,vS2}
allow for similar simplifications in particular in the background
field method. An analysis of this method will be given in future
work.

Notice that the $\beta$-function is showing up as the coefficient
of $(\gamma_1^r)'(x)$ in (\ref{system}) above, namely
\begin{equs}
  \beta(x) = x\sum_{j \in \mathcal{R}} |s_j| \gamma_1^j(x)
\end{equs}
in the system case and
\begin{equs}
  \beta(x) = xs\gamma_1(x)
\end{equs}
in the single equation case with $s>0$. Consequently this differential
equation is well suited to improving our understanding of the
$\beta$-function. Furthermore, solving for the derivative
${\gamma_1^r}^\prime(x)$ shows the appearance of the
$\beta$-function in the denominator. A zero of the
$\beta$-function is hence a degenerate singular case, even if it
corresponds to a simple scaling behavior of quantum fields
reflected by an abelian renormalization group flow (or,
equivalently, a co-commutative expansion in the Hopf algebra of
perturbation theory) \cite{linetude}.

In particular in the single equation case, we see immediately
from \eqref{de} that any zeroes of $\beta(x)$ must occur either
where $P(x)=0$ --- so we have no quantum corrections driving the
equations of motion at some particular value of the coupling ---
or where $\gamma_1'(x)$ is infinite. The second of these
possibilities is not physically reasonable for a finite value of
$x$ and is indeed only realized by solutions for $\gamma_1^r$
which are multi-valued.

For QED taken out to four loops and correcting the primitives for our
setup and using values from \cite{gkls} we have
\begin{equs}
  P(x) = \frac{x}{3} + \frac{x^2}{4} +
  (-0.0312 + 0.06037)x^3 +(-0.6755 + 0.05074)x^4~.
\end{equs}
$P(x)$ is
decreasing for $x>0.653\ldots$, which will not be permissible below, and it has a zero at
$x=0.992\ldots$.  We expect the zero to be spurious as it would
immediately lead to a zero of the $\beta$-function, and believe
both problems are due only to taking the 4 loop approximation
beyond where it is valid.  On the other hand the estimate 
\begin{equs}
c_k \sim_{k\>> 1}(-1)^kk!k^3~,
\end{equs}
where $P(x)=\sum_k c_k x^k$, of \cite{ipz} suggests via resummation
that $P(x)$ is
bounded for large $x$, which tantalizingly leads to the possibility of the 
absence of a Landau pole in an analysis of the $\beta$-function beyond 
perturbation theory.  The behavior $P(x)$ in the QED case, clearly deserves 
more attention from this viewpoint.

The system case is not quite so simple. Assume $\beta(x)=0$. If
we rule out infinite $(\gamma_1^r)'(x)$, then we can only
conclude that for each $r \in \mathcal{R}$
\begin{equs}
  \gamma_1^r(x) + \gamma_1^r(x)^2 - P_r(x) = 0~.
\end{equs}

\subsection{Exposition of the main results}
In the remainder of this article, we will restrict ourselves to
the single equation case \eqref{de} with $s>0$, which we can rewrite as
\begin{equs}
\label{preDSeqn}
\frac{{\rm d}\gamma_1}{{\rm d}x}=f(\gamma_1(x),x)~,
~~~\mbox{ where }~~~
f(\gamma_1,x)=
\frac{\gamma_1
+\gamma_1^2-P(x)}{sx\gamma_1}~.
\end{equs}
The main assumptions we make on the primitive skeleton function are:
\begin{itemize}
\item[H1:] $P$ is a twice differentiable function on ${\bf R}^{+}$, with 
$P(0)=0$ and $P(x)>0$ if $x>0$.
\item[H2:] $P$ is everywhere increasing.
\end{itemize}

To motivate the detailed study of (\ref{preDSeqn}) we will
conduct in the remainder of this paper, we consider briefly the
simple examples $P(x) = x$ and $s=1$ or $s=2$. For a qualitative
overview, see Figure \ref{alltogether}.

\begin{figure}
\begin{center}
\unitlength1mm
\begin{picture}(190,90)(0,0)
\put( 10,0){\epsfig{file=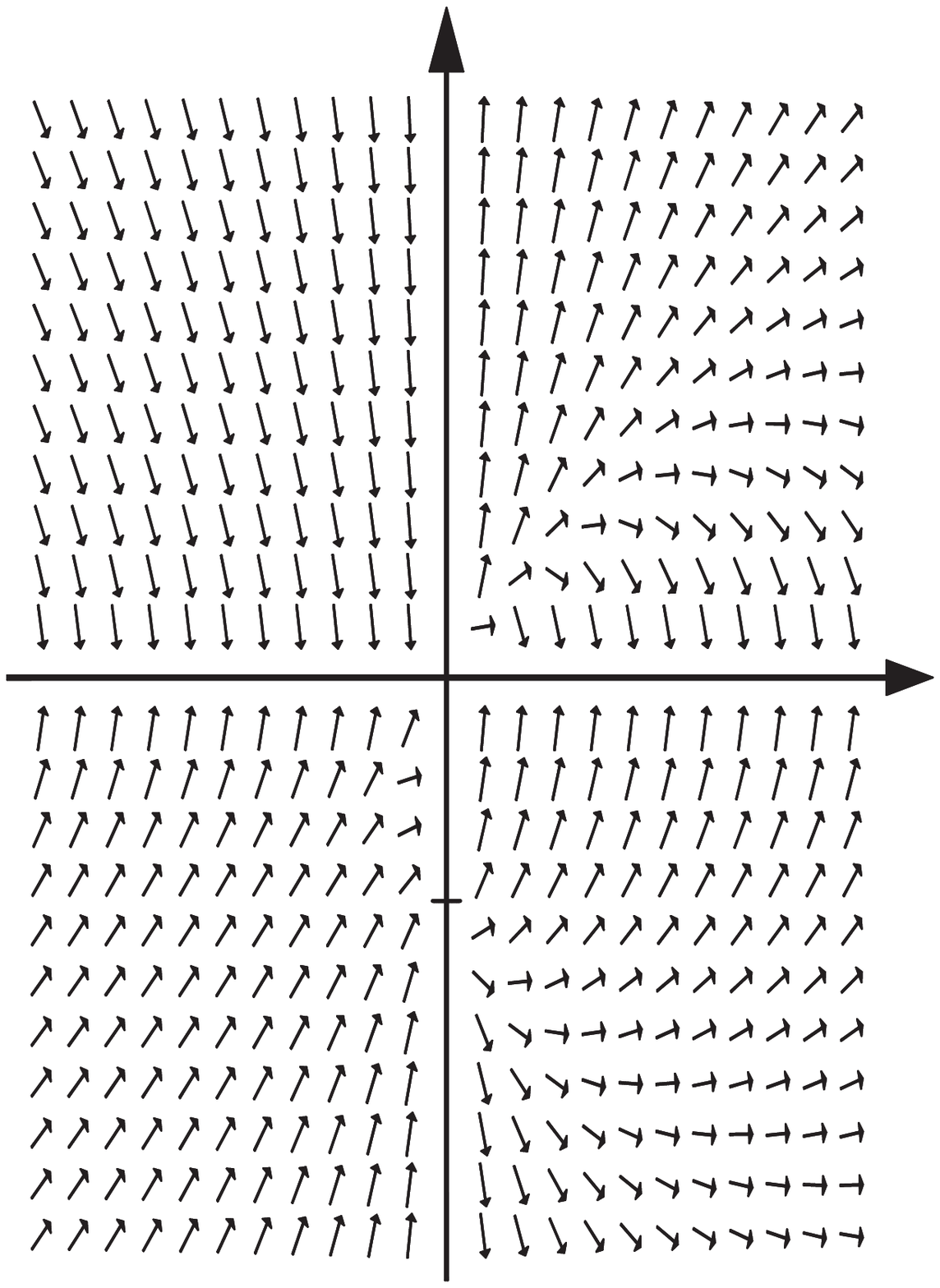,height=8.5cm}}
\put( 90,0){\epsfig{file=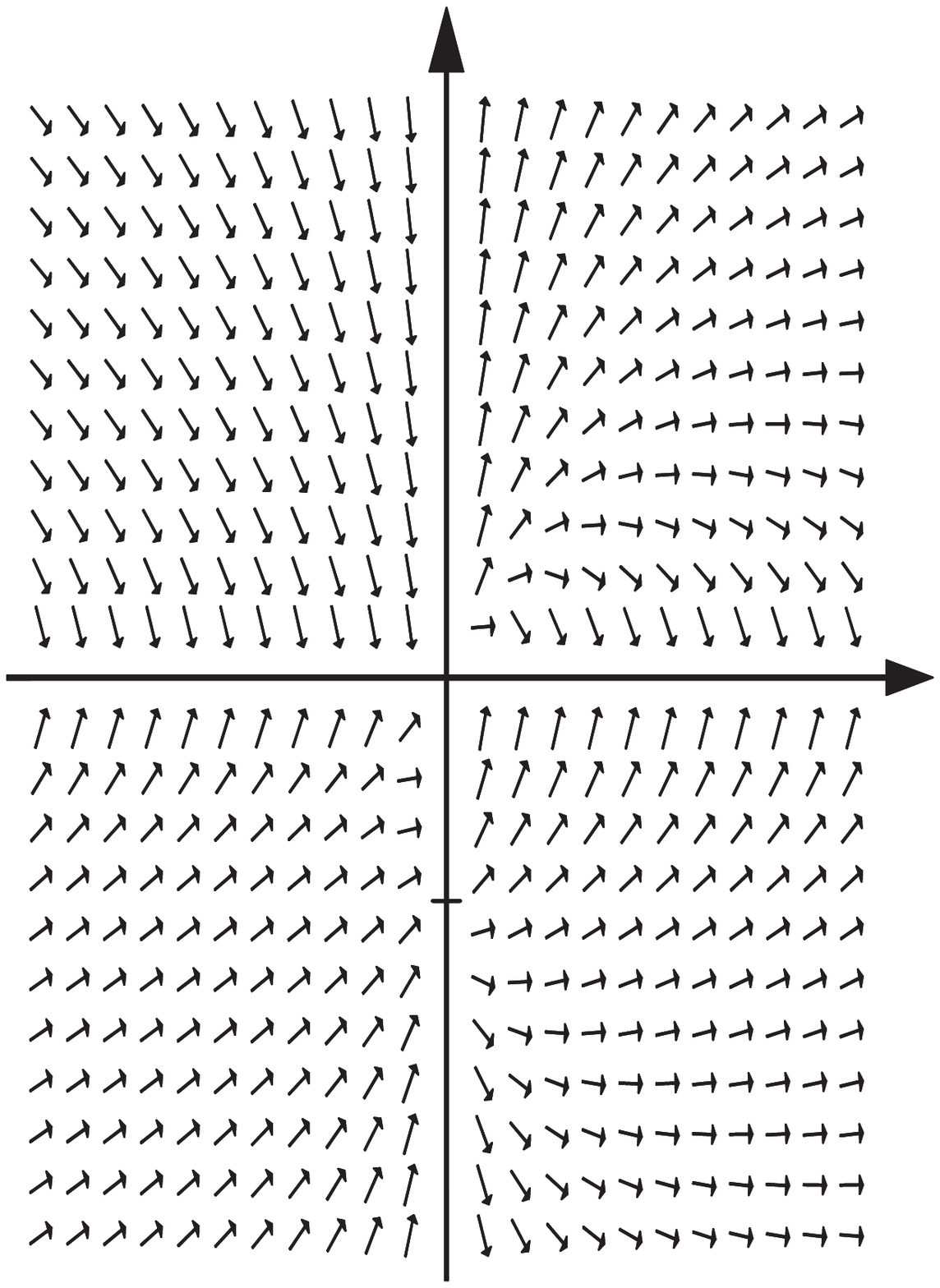,height=8.5cm}}
\put(69,43){$x$}
\put(149,43){$x$}
\put(42,82){$\gamma_1$}
\put(122,82){$\gamma_1$}
\end{picture}
\end{center}
\caption{Direction fields for $P(x)=x$, $s=1$ on left panel,
$s=2$ on right panel. Note how the two pictures look similar;
yet, as will be shown, the $s=1$ case admits global positive
solutions, and $s=2$ does not.}\label{alltogether}
\end{figure}

If $s=1$, we can solve (\ref{preDSeqn}) by specifying
$\gamma_1(x)$ at $x=1$, finding
\begin{equs}
\gamma_1(x)=x + x~W
\myl{12}(\gamma_1(1)-1)\exp\myl{10}\gamma_1(1)-{\textstyle\frac{1}{x}}\myr{10}
\myr{12}~,
\label{eqn:solsol}
\end{equs}
where $W$ is the Lambert $W$ function. A few of such solutions,
along with the direction field associated with (\ref{preDSeqn}) are
displayed in figure \ref{fig:solutionsnonsystem}.
\begin{figure}
\begin{center}
\unitlength1mm
\begin{picture}(190,90)(0,0)
\put( 10,0){\epsfig{file=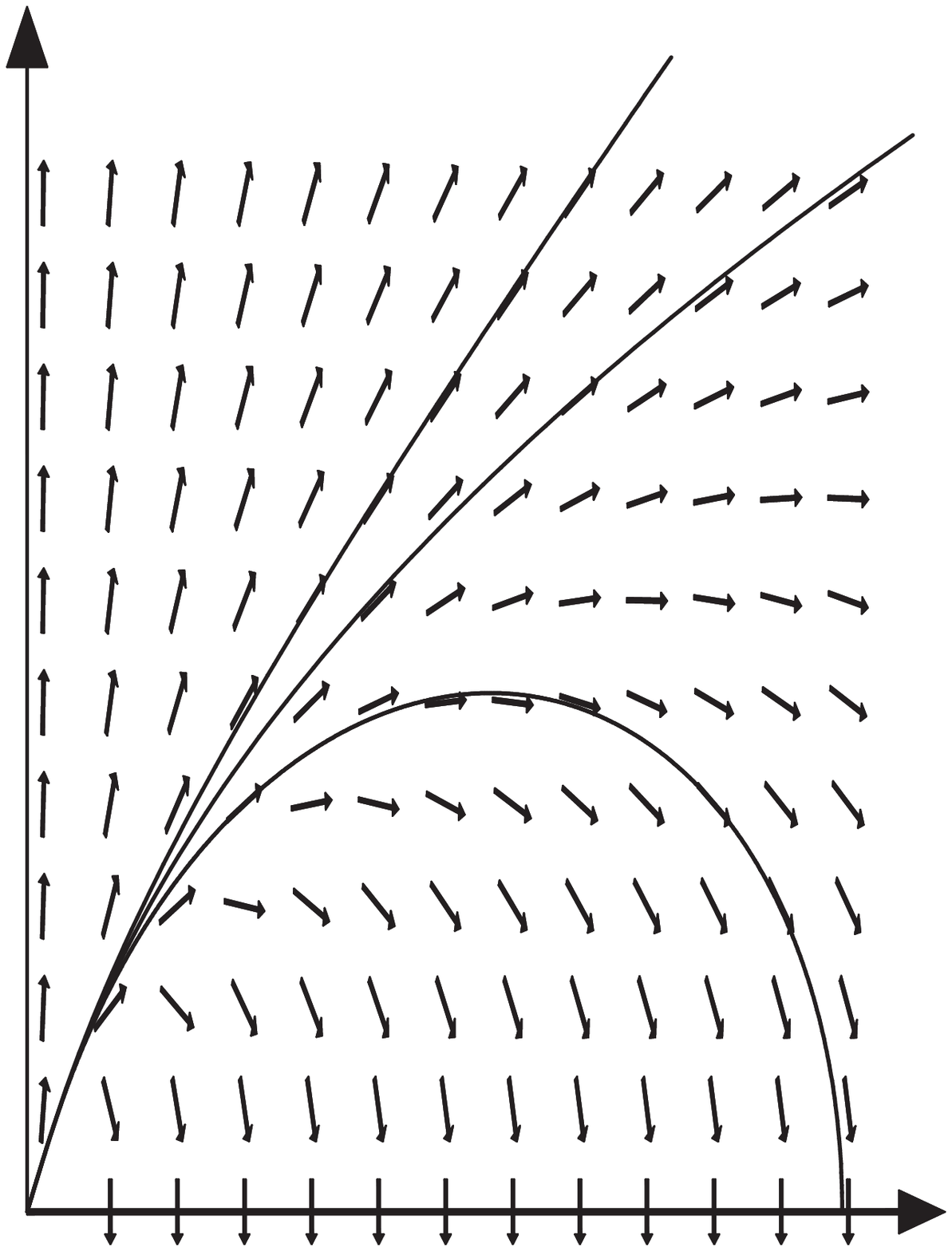,height=8.5cm}}
\put( 90,0){\epsfig{file=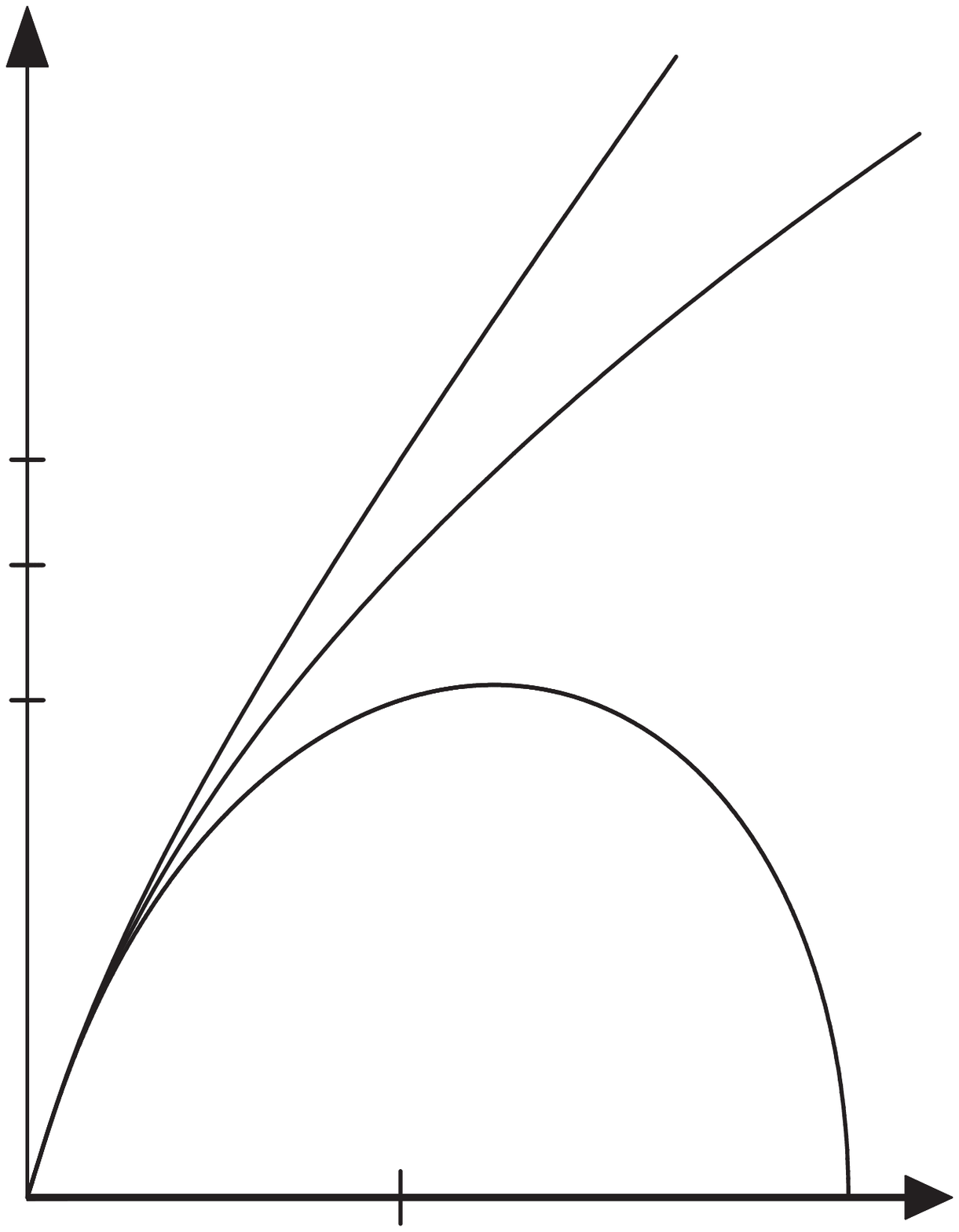,height=8.5cm}}
\put(147,-3){$x^{\star}$}
\put(154,4){$x$}
\put(72,4.5){$x$}
\put(116.5,-4){$1$}
\put(94,83){$\gamma_1$}
\put(80,45){$\gamma_1^{\star}(1)$}
\put(80,52.25){$\gamma_1(1)$}
\put(80,35.25){$\gamma_1(1)$}
\put(14,83){$\gamma_1$}
\end{picture}
\end{center}
\caption{$P(x)=x$, $s=1$ illustrating that all solutions of
(\ref{preDSeqn}) tend to $0$ as $x\to0$, but that some choices of
$\gamma_1(1)$ lead to solutions that extend as $x\to\infty$,
while others lead to $\gamma_1(x^{\star})=0$ at some finite
$x^{\star}$ and cease to exist beyond that point.}
\label{fig:solutionsnonsystem}
\end{figure}
Note first that
$\frac{\gamma_1(x)}{x}\to1$ as $x\to0$, irrespective of
$\gamma_1(1)$. On the other hand, A careful study of
(\ref{eqn:solsol}) shows that there is a preferred value
$\gamma_1^{\star}(1)=1+W(-\ed^{-2})=0.8414\ldots$ that separates
initial conditions at $x=1$ in two disjoint intervals, ${\rm
I}(1)=(0,\gamma_1^{\star}(1))$ and ${\rm
G}(1)=[\gamma_1^{\star}(1),\infty)$. All solutions with
$\gamma_1(1)\in{\rm G}(1)$ are global, i.e. they exist for all
$x\geq0$, while all solutions with $\gamma_1(1)\in{\rm I}(1)$
satisfy $\gamma_1(x^{\star})=0$ for some $x^{\star}>1$ 
depending on $\gamma_1(1)$ and cannot be continued
beyond $x=x^{\star}$. The solution $\gamma_1^{\star}(x)$ of
(\ref{preDSeqn}) with $\gamma_1(1)=\gamma_1^{\star}$ is thus the
smallest global solution and is called the separatrix for that
reason. Furthermore,
$\gamma_1^{\star}(x)\sim\sqrt{2x}-\frac{2}{3}+{\cal O}(x^{-1/2})$
as $x\to\infty$. See also figure \ref{fig:solutionsnonsystem}.

If $s=2$, on the other hand, one can only obtain an implicit
formula for the solutions, as was originally done in \cite{bkerfc}:
\begin{equs}
\sqrt{x}
=
\sqrt{x_0}~ 
\ed^{\Gamma_1(x_0)^2-\Gamma_1(x)^2}-
\sqrt{2}
\ed^{-\Gamma_1(x)^2}
\int_{\Gamma_1(x)}^{\Gamma_1(x_0)} \ed^{z^2}{\rm d}z
~~~\mbox{where}~~~
\Gamma_1(x)=\frac{1+\gamma_1(x)}{\sqrt{2x}}~.
\label{eqn:hihihi}
\end{equs}
Note that $\Gamma_1(x)\leq\Gamma_1(x_0)$ for all $x\geq x_0$ (as
long as $\gamma_1(x)\geq0$, or as long as $\gamma_1(x)$ exists),
since
\begin{equs}
\frac{{\rm d}\Gamma_1(x)}{{\rm d}x}=
-\frac{1}{2^{3/2}\sqrt{x}\gamma_1(x)}\leq0~.
\end{equs}
In particular, (\ref{eqn:hihihi}) gives an upper bound on the
maximal interval of existence of solutions:
\begin{equs}
x\leq x_0~\ed^{2\Gamma_1(x_0)^2}<\infty~.
\end{equs}
We thus see that if $P(x)=x$ and $s=2$, there are {\em no} global
solutions of (\ref{preDSeqn}).

Going beyond $P(x)=x$% and $s=1$ or $s=2$, one can try substituting
%a series expansion at $x=0$ into the differential equation
%(\ref{preDSeqn}) (assuming $P$ to be analytic, say), getting
%\begin{equs}
%\gamma_1(x)&=P'(0)x+\myl{12}
%\frac{P''(0)}{2}+(s-1)P'(0)^2
%\myr{12}x^2\\
%&\phantom{=~}+
%\myl{12}
%\frac{P'''(0)}{6}+
%({\textstyle\frac{3}{2}-1})
%P'(0)P''(0)+
%(s-1)(3s-2)P'(0)^3
%\myr{12}x^3+\ldots~.
%\end{equs}
%Assuming the series is convergent, we denote the corresponding
%solution by $\tilde{\gamma}_{1}(x)$. However, for any
%$C\in{\bf R}$, we note that
%\begin{equs}
%\gamma_{1}(x)=\tilde{\gamma}_{1}(x)\myl{12}
%1+W\myl{10}
%C\ed^{-\frac{1}{{sP'(0)x}}}
%\myr{10}
%\myr{12}
%\label{eqn:beyondallorders}
%\end{equs}
%also solves the series expansion of (\ref{preDSeqn}) at $x=0$, in
%agreement with the $P(x)=x$ and $s=1$ case, where we would have
%$\tilde{\gamma}_1(x)=x$, see also (\ref{eqn:solsol}).
%Unfortunately, this analysis, being perturbative in nature, cannot
%give any information on the question of global existence of
%solutions.
%
%To answer that question
, we first define the following (possibly
infinite) quantities
\begin{equs}
\label{condition_first}
{\cal D}_s(P)&=\int_{x_0}^\infty \frac{P(z)}{z^{1+2/s}} {\rm
d}z~~~~~\mbox{and}~~~~~
{\cal L}(P)=
\int_{x_0}^{\infty}
\frac{2{\rm d}z}{z(\sqrt{1+4P(z)}-1)}~.
\end{equs}
We can now state our rigorous results, which will be proven later
by a strategy largely inspired by \cite{vBSW}:
\begin{itemize}
\item Under the hypothesis H1 alone, there are no global solutions if
${\cal D}_s(P)=\infty$, while there exist some global solutions
if ${\cal D}_s(P)<\infty$. Note that in the $P(x)=x$ case, we
recover the previous analysis, since ${\cal D}_s(P(x)=x)<\infty$
if and only if $s<2$. 
\item Under the additional hypothesis H2, there is a (non-trivial)
minimal solution $\gamma_1^{\star}(x)$ which exists for all $x>0$
and separates global solutions (above $\gamma_1^{\star}(x)$) from
solutions that exist only for finite $x$ (below
$\gamma_1^{\star}(x)$).
\end{itemize}

The separatrix $\gamma_1^{\star}(x)$, in the case when it exists,
is thus the minimal physical solution, and it matches
perturbation theory near the origin. Further, with appropriate
conditions on $P(x)$, its behavior in terms of the running
coupling is extremely special as we will discuss below.

Consequently we conjecture that it is the solution chosen by nature.
Note that this solution does not give us a preferred value for $x$: if we vary $
x$ in accordance with  the renormalization group equation for the
running coupling,
\begin{equs}
\frac{{\rm d} x(x_0,L)}{{\rm d}L}=\beta(x(x_0,L))~,
\end{equs}
we just move along our distinguished curve, but there is no
preferred value of $ x$ from the existence of a distinguished
solution.

Following the proof of these results we interpret them in terms
of the running coupling, by using the renormalization group
equation, which in this case reduces to
\begin{equs}
\frac{{\rm d} x}{{\rm d}L}=\beta(x)=s x \gamma_1(x)~.
\end{equs}
If ${\cal D}_s(P)<\infty$, then we will show that if also ${\cal
  L}(P)<\infty$ then all global solutions
give Landau poles, whereas the separatrix
is the only global solution that does {\em not} lead to a Landau pole if
${\cal L}(P)=\infty$, or, in particular, if
$\lim_{x\to\infty}P(x)<\infty$.

The remainder of this paper is organized as follows: in Section
\ref{main}, we will consider the existence/absence of global
solutions of (\ref{preDSeqn}), prove the existence of the
separatrix in the appropriate case, and state the asymptotic
properties of the global solutions. Then, in Section
\ref{sec:running}, we will interpret the results of Section
\ref{main} in terms of the running coupling. The paper concludes
with Section \ref{sec:technical}, in which we give the details
left over in the proofs of Section \ref{main} and
\ref{sec:running}.

%Note that we can parameterize various curves for a non-vanishing
%$\beta$-function by their value taken at some positive $x_0>0$,
%or equivalently by their curvature at the origin. In
%perturbation theory this value is given just by the coefficient
%of the second derivative at zero, which amounts to the two-loop
%coefficient of the $\beta$-function, which in QED is a
%scheme-invariant for all renormalization schemes.

%We also note that the renormalization group equation for
%$x=x(x_0,t)$ above dictates that asymptotically $x$ has a
%horizontal tangential as a function of $t$ for solutions for
%$\beta$ below our distinguished curve (as these solutions for
%$\beta$ vanish at some finite positive $x_1>x_0$), while
%solutions above the separatrix develop a Landau pole for a
%finite (possibly very large) $x_1$, $x(x_1,t_1)=+\infty$ for
%finite $t_1$, as any approximation of $\beta$ by a polynomial
%forces such a pole to appear. We hence get a vertical
%tangential for such $x$.

%WHAT CAN WE SAY ABOUT THE SEPARATRIX ITSELF?

\section{Main results}
\label{main}

Building on our analysis of the $P(x)=x$, $s=1$ case, and since,
$f(\gamma_1,x)$ is singular at both $x=0$ and $\gamma_1=0$, we
first avoid those singularities by considering instead of
(\ref{preDSeqn}) the initial value problem
\begin{equs}
\label{DSeqn}
\frac{{\rm d}\gamma_1(x)}{{\rm d}x} =
\frac{\gamma_1(x)
+\gamma_1(x)^2-P(x)}{sx\gamma_1(x)}~,~~~
\gamma_1(x_0)=\gamma_0>0~,
\end{equs}
for some $x_0>0$. %Note that we could not instead use as initial
%data neither $\gamma_1(0)=0$ or $\gamma'(0)=P'(0)$, nor any
%higher derivative at $x=0$, see (\ref{eqn:beyondallorders}).
Since $f(\gamma_1,x)$ is regular away from $x=0$ and
$\gamma_1=0$, solutions of (\ref{DSeqn}) exist {\em locally}
around $x=x_0$. Furthermore these solutions are unique and are
continuous w.r.t. the initial condition $\gamma_0$. These three
statements (local existence, uniqueness and continuity) can be
rigorously proved with standard techniques, using that if
$\gamma_1$ and $\gamma_2$ are solutions of (\ref{DSeqn}), they
satisfy the integral equations
\begin{equs}
\label{solution}
\gamma_i(x) &= \left(\frac{x}{x_0}\right)^{1/s}(1+\gamma_i(x_0))-1
- x^{1/s}\int_{x_0}^x \frac{P(z)}{sz^{1+1/s}\gamma_i(z)}{\rm
d}z~,\\
\gamma_1(x)-\gamma_2(x)&=
\myl{12}
\gamma_1(x_0)-\gamma_2(x_0)
\myr{12}~
\exp\myl{18}
\int_{x_0}^{x}
\frac{1}{sz}
+\frac{P(z)}{sz\gamma_1(z)\gamma_2(z)}
{\rm d}z
\myr{18}~,
\label{eqn:diff}
\end{equs}
as long as they exist (and are strictly positive).

We now prove that
global solutions of (\ref{DSeqn}) exist if and only if
\begin{equs}
\label{condition}
\int_{x_0}^\infty  \frac{P(z)}{z^{1+2/s}} {\rm d}z <\infty~,
\end{equs}
for some finite $x_0>0$. Note that in the particular case
$P(x)=x$, (\ref{condition}) reduces to $s<2$, which agrees with
our previous analysis. This is the content of the following theorem:
\begin{theorem}\label{thm:dichotomy}
Let $s>0$ and $P$ be a ${\cal C}^2$ everywhere positive
function. There exist positive global solutions of (\ref{DSeqn})
if and only if $P$ satisfies the integrability condition
(\ref{condition}) for some $x_0>0$.
\end{theorem}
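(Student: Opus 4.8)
The plan is to kill both singularities at once with the substitution
\[
H(x) = \frac{1+\gamma_1(x)}{x^{1/s}}~,
\]
under which the first integral equation in (\ref{solution}) becomes the clean identity
\[
H(x) = H(x_0) - \int_{x_0}^x \frac{P(z)}{sz^{1+1/s}\gamma_1(z)}\,{\rm d}z~,
\qquad \gamma_1(x) = x^{1/s}H(x) - 1~,
\]
with $H(x_0) = M := (1+\gamma_0)/x_0^{1/s}$. Hence, on any interval where the solution exists and stays positive, $H$ is strictly decreasing with $H(x) > x^{-1/s} > 0$, and in particular $\gamma_1(z) < z^{1/s}H(z) \le z^{1/s}H(x_0) = Mz^{1/s}$ there. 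These two observations carry the whole argument; note that hypothesis H2 is not used for this theorem.

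\emph{Necessity.} First I would suppose $\gamma_1$ is a positive global solution for some $x_0,\gamma_0$. Letting $x\to\infty$ in the identity for $H$ and using $H>0$ gives $\int_{x_0}^\infty P(z)\,(sz^{1+1/s}\gamma_1(z))^{-1}\,{\rm d}z \le M < \infty$. Replacing $\gamma_1(z)$ by the larger quantity $Mz^{1/s}$ only shrinks the integrand, so
\[
\int_{x_0}^\infty \frac{P(z)}{z^{1+2/s}}\,{\rm d}z
= sM\int_{x_0}^\infty \frac{P(z)}{sz^{1+1/s}\cdot Mz^{1/s}}\,{\rm d}z
\le sM^2 < \infty~,
\]
which is exactly (\ref{condition}). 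Contrapositively, if (\ref{condition}) fails then no positive global solution can exist.

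\emph{Sufficiency.} Now assume $\cD_s(P)<\infty$ and keep $x_0$ fixed. I would choose $\gamma_0$ so large that $M=(1+\gamma_0)/x_0^{1/s}$ satisfies $\tfrac12 M - \tfrac{2}{sM}\cD_s(P) \ge 2x_0^{-1/s}$ (possible since the left side tends to $+\infty$ with $\gamma_0$), and claim that the maximal solution obeys $\gamma_1(x)\ge \tfrac12 Mx^{1/s}$ on its whole interval of existence $[x_0,T)$. Indeed, on the relatively open set where this bound holds, $H$ is decreasing and $\gamma_1(z)\ge \tfrac12 Mz^{1/s}$, so
\[
H(x) = M - \int_{x_0}^x \frac{P(z)}{sz^{1+1/s}\gamma_1(z)}\,{\rm d}z
\ge M - \frac{2}{sM}\int_{x_0}^x \frac{P(z)}{z^{1+2/s}}\,{\rm d}z
\ge M - \frac{2}{sM}\cD_s(P)~,
\]
whence $\gamma_1(x) = x^{1/s}H(x) - 1 \ge x^{1/s}\bigl(M - \tfrac{2}{sM}\cD_s(P)\bigr) - 1 \ge \tfrac12 Mx^{1/s} + 2(x/x_0)^{1/s} - 1 \ge \tfrac12 Mx^{1/s} + 1$, using $x\ge x_0$. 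So the bound is strictly self-improving; together with the two-sided control $\tfrac12 Mx_0^{1/s}\le \gamma_1(x)\le Mx^{1/s}$ on each compact $[x_0,X]\subset[x_0,T)$ (upper bound again from $H$ decreasing), which keeps $\gamma_1$ in the region where $f$ is $\cal C^1$, a standard continuation argument shows the set where $\gamma_1(x)\ge\tfrac12 Mx^{1/s}$ is open, closed and nonempty in $[x_0,\infty)$, hence all of it, so $T=\infty$ and a positive global solution exists.

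The variable change and the two displayed estimates are routine. The one step I expect to need genuine care is the continuation argument in the sufficiency direction: making fully precise that the a priori lower bound simultaneously (i) keeps the trajectory away from the singular locus $\gamma_1=0$, so that the solution cannot terminate at a finite $x^\star$, and (ii) is strictly self-improving, so that the set of admissible $x$ is closed as well as open. Both are standard ODE bookkeeping once the estimate $\gamma_1(x)\ge\tfrac12 Mx^{1/s}+1$ above is in hand, so I anticipate no conceptual obstacle there, only that it is the place where the argument must be written carefully.
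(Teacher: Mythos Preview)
Your argument is correct and follows essentially the same plan as the paper: the necessity direction is identical (both insert the upper bound $\gamma_1(z)\le Mz^{1/s}$ into the integral in \eqref{solution} to force $\gamma_1$ negative when the integral diverges), and for sufficiency both produce an a~priori lower bound $\gamma_1(x)\gtrsim x^{1/s}$ that keeps the solution away from the singularity at $\gamma_1=0$. The one technical difference is in how that lower bound is obtained: you bootstrap the hypothesis $\gamma_1\ge\tfrac12 Mx^{1/s}$ through the integral equation for $H$, whereas the paper multiplies \eqref{DSeqn} by $\gamma_1$ to get the linear differential inequality
\[
\tfrac12\frac{{\rm d}}{{\rm d}x}\gamma_1^2 \ge \frac{\gamma_1^2}{sx}-\frac{P(x)}{sx}~,
\]
which integrates directly (Gr\"onwall, with integrating factor $x^{-2/s}$) to
\[
\gamma_1(x)^2 \ge x^{2/s}\Bigl(\frac{\gamma_1(x_0)^2}{x_0^{2/s}}-\frac{2}{s}\int_{x_0}^{x}\frac{P(z)}{z^{1+2/s}}\,{\rm d}z\Bigr)~.
\]
This gives the same $x^{1/s}$ lower bound in one line, without the open--closed continuation bookkeeping you flag at the end; your substitution $H=(1+\gamma_1)/x^{1/s}$ is clean, but the paper's route avoids the bootstrap entirely.
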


Before proving Theorem \ref{thm:dichotomy}, we note that
condition (\ref{condition}) places a strong restriction
on the asymptotic behavior of $P(x)$ as $x\to\infty$. For example
in the case of QED, $s=1$, $P(x)$ can grow at most like
$o(x^{2})$ as $x\to\infty$ for global positive solution of
(\ref{DSeqn}) to exist. On the other hand, if
$\lim_{x\to\infty}P(x)<\infty$, (\ref{condition}) is satisfied
{\em for all} $s>0$.

\begin{proof}
Consider first that (\ref{condition}) holds and let $x_0>0$.
Choose then 
\begin{equs}
\gamma_1(x_0)=
x_0^{1/s}
\left(
\frac{2}{s}\int_{x_0}^{\infty}\frac{P(z)}{z^{1+2/s}}{\rm d}z
+\epsilon^2\right)^{1/2}
\label{eqn:g0choice}
\end{equs}
for some $\epsilon>0$. Assume {\it ab absurdum} that the
corresponding solution $\gamma_1(x)$ has a maximal finite
interval of existence $[x_0,x_1]$ for some $x_1>x_0$. It follows
that either $\gamma_1(x_1)=\infty$ or $\gamma_1(x_1)=0$. The
first case cannot happen since from (\ref{solution}), we find
\begin{equs}
\gamma_1(x) \leq
\left(\frac{x}{x_0}\right)^{1/s}
\left(1+\gamma_1(x_0)\right)
\label{eqn:basic}
\end{equs}
for all $x\in[x_0,x_1]$, hence $\gamma_1(x_1)=0$. This also leads
to a contradiction, since rewriting (\ref{DSeqn}) as
\begin{equs}
\label{fthmest}
\frac{1}{2}\frac{{\rm d}}{{\rm d}x}(\gamma_1(x)^2) = \frac{\gamma_1(x)^2}{sx}
+\frac{\gamma_1(x)}{sx} -\frac{P(x)}{sx} \ge \frac{\gamma_1(x)^2}{sx}
-\frac{P(x)}{sx}
\end{equs}
(using $\gamma_1(x)\geq0$ for $x\in[x_0,x_1]$), integrating that
inequality on $[x_0,x_1]$ and using (\ref{eqn:g0choice}) gives
\begin{equs}
\gamma_1(x_1)^2 \ge x_1^{2/s}\left(\frac{\gamma_1(x_0)^2}{x_0^{2/s}}
-\frac{2}{s}\int_{x_0}^{x_1}\frac{P(z)}{z^{1+2/s}}{\rm d}z
\right)=\myl{10}\epsilon~x_1^{1/s}\myr{10}^2>0~.
\label{eqn:contradic}
\end{equs}
This contradicts our {\it ab absurdum} assumption that
$\gamma_1(x_1)=0$, and so $x_1=\infty$.

To prove the converse, assume {\it ab absurdum} that there
exist a global positive solution of (\ref{DSeqn}) for any
$\gamma_1(x_0)>0$ if
\begin{equs}
\lim_{x\to\infty}
\int_{x_0}^{x} \frac{P(z)}{z^{1+2/s}} {\rm d}z=\infty~.
\end{equs}
Since the solution is global, (\ref{eqn:basic}) holds for all
$x\geq x_0$, and inserting (\ref{eqn:basic}) into
(\ref{solution}) gives
\begin{equs}
\gamma_1(x) \le x^{1/s}\left( \frac{1+\gamma_1(x_0)}{x_0^{1/s}}
- \frac{x_{0}^{1/s}}{s(1+\gamma_1(x_0))} \int_{x_0}^x
\frac{P(z)}{z^{1+2/s}}{\rm d}z\right) -1~,
\label{eqn:improved}
\end{equs}
a contradiction, since (\ref{eqn:improved}) becomes negative as
$x\to\infty$.
\end{proof}

In the case where global positive solutions do exist, we now
prove that the notion of {\em smallest} global positive solution
(the separatrix) is well defined, at least if $P$ is strictly
increasing:
\begin{theorem}
\label{thm:separatrixintro}
Let $x_0>0$, $s>0$ and assume that $P$ is a ${\cal C}^2$
everywhere positive increasing function that satisfies
(\ref{condition}). Then there exist a unique value
$\gamma_1^{\star}(x_0)$ such that the solution of (\ref{DSeqn})
exists globally if and only if
$\gamma_1(x_0)\geq\gamma_1^{\star}(x_0)$. Furthermore, for every
global solution $\gamma_1(x)$, there exists a constant $C>0$ such
that for all $x\geq0$,
\begin{equs}
\gamma_c(x)<
\gamma_1^{\star}(x)\leq\gamma_1(x) \le \gamma_c(x) +
Cx^{\frac{1}{s}}
+
\my{\{}{18}
\begin{array}{ll}
0& \mbox{if }~~x\geq x_0 \\[1mm]
B_s(x,x_0) & \mbox{if }~~x\leq x_0
\end{array}
\label{eqn:refinedestimates}
\end{equs}
where $\gamma_1^{\star}(x)$ is the solution of (\ref{DSeqn}) that
corresponds to the initial condition $\gamma_1^{\star}(x_0)$ and
\begin{equs}
\gamma_c(x)&=\frac{\sqrt{1+4P(x)}-1}{2}~,
\label{eqn:defnull}
\\
B_s(x,x_0)&=
x^{1/s}\int_{x}^{x_0}
\frac{{\rm d}z}{z^{1/s}}=
\my{\{}{24}
\begin{array}{ll}
{\cal O}(x) & \mbox{as }x\to0~~\mbox{if }~~s<1 \\[1mm]
{\cal O}(x~|\ln(x)|) & \mbox{as }x\to0~~\mbox{if }~~s=1\\[1mm]
{\cal O}(x^{1/s}) & \mbox{as }x\to0~~\mbox{if }~~s>1
\end{array}~.
\label{eqn:def_fs}
\end{equs}
In particular, $\lim_{x\to0}\gamma_1(x)=0$ for every positive
global solution.
\end{theorem}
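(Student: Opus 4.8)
\emph{Strategy.} The argument rests on two structural features of (\ref{DSeqn}). First, $f(\gamma_1,x)=(\gamma_1+\gamma_1^2-P)/(sx\gamma_1)$ is \emph{strictly increasing} in $\gamma_1$ for $\gamma_1>0$ (its $\gamma_1$-derivative is $\tfrac1{sx}+\tfrac{P}{sx\gamma_1^2}>0$), so by (\ref{eqn:diff}) solutions are totally ordered by their value at $x_0$ and never cross. Second, the curve $\gamma_c(x)$ of (\ref{eqn:defnull}) --- the positive root of $t^2+t-P(x)=0$, so that $\gamma_c(\gamma_c+1)=P$ --- is an explicit strict supersolution: on it the numerator of $f$ vanishes while $\gamma_c'=P'/\sqrt{1+4P}>0$ by H2 (so $\gamma_c$ is increasing). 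The plan is to define the separatrix as $\gamma_1^\star(x_0):=\inf G$, where $G$ is the set of values $\gamma_1(x_0)$ yielding a global solution, then to show $G$ is a nonempty closed up-set of the form $[\gamma_1^\star(x_0),\infty)$, and finally to read the bounds in (\ref{eqn:refinedestimates}) off the supersolution comparison and the integral formula (\ref{solution}). As a preliminary I would record that every solution with $\gamma_1(x_0)>0$ extends to all of $(0,x_0]$ and stays positive there: backward blow-up is impossible (in (\ref{solution}) the right side would stay bounded while the left diverges), and the solution cannot reach $\gamma_1=0$ at an interior point $x_1>0$ since there $\gamma_1'<0$, i.e.\ $\gamma_1$ increases as $x$ decreases, contradicting $\gamma_1(x_1^+)=0$. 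With this, (\ref{eqn:diff}) and the forward bound (\ref{eqn:basic}) show $G$ is an up-set, and it is nonempty by Theorem~\ref{thm:dichotomy}.

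The heart of the matter is the claim that \emph{any} solution with $\gamma_1(x_1)\le\gamma_c(x_1)$ for some $x_1\ge x_0$ is not global. Once the solution lies at or below $\gamma_c$, strict monotonicity of $f$ in $\gamma_1$ together with the supersolution property keep it strictly below $\gamma_c$ for all larger $x$, where the numerator of $f$ is negative, so $\gamma_1$ is strictly decreasing; after nudging the base point slightly (using $P'>0$ to leave the contact point) we may assume $a:=\gamma_1(x_1)<\gamma_c(x_1)$ strictly, whence $\gamma_1(x)^2+\gamma_1(x)<a^2+a<P(x_1)\le P(x)$ for $x\ge x_1$. Feeding this into the equality in (\ref{fthmest}) gives $\tfrac12(\gamma_1^2)'\le-\varepsilon_1/(sx)$ with $\varepsilon_1=P(x_1)-a^2-a>0$, so $\gamma_1(x)^2$ would become negative at a finite value of $x$ --- impossible, so the solution must have vanished before then (this works whether or not $P$ is bounded). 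Hence every global solution has $\gamma_1(x)>\gamma_c(x)$ for $x\ge x_0$, and running the supersolution comparison backward from $x_0$, where $\gamma_1(x_0)\le\gamma_c(x_0)$ has just been excluded, extends this to $0<x<x_0$ as well; in particular $\inf G\ge\gamma_c(x_0)>0$. To see that $\gamma_1^\star(x_0):=\inf G$ lies in $G$, I would let $\gamma_1(x_0)\downarrow\gamma_1^\star(x_0)$ through $G$: by (\ref{eqn:diff}) the corresponding global solutions decrease pointwise, and on each compact subinterval of $(0,\infty)$ they are squeezed between $\gamma_c>0$ and the bound (\ref{eqn:basic}), so by Arzel\`a--Ascoli they converge to a solution that stays $\ge\gamma_c>0$ and is therefore global. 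Thus $G=[\gamma_1^\star(x_0),\infty)$, and applying the dichotomy to the separatrix itself gives $\gamma_c(x)<\gamma_1^\star(x)$ for all $x>0$; combined with the ordering (\ref{eqn:diff}), which gives $\gamma_1^\star(x)\le\gamma_1(x)$ for every global $\gamma_1$, this is the lower half of (\ref{eqn:refinedestimates}).

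For the upper half, the case $x\ge x_0$ is immediate from (\ref{eqn:basic}) and $\gamma_c\ge0$, with $C=(1+\gamma_1(x_0))/x_0^{1/s}$ and no boundary term. For $0<x\le x_0$ I would use $\gamma_1(z)>\gamma_c(z)$, hence $P(z)/\gamma_1(z)<P(z)/\gamma_c(z)=1+\gamma_c(z)$, inside (\ref{solution}) (with the sign of the integral flipped because $x<x_0$), and then integrate by parts: the $\gamma_c$-terms combine to leave exactly $\gamma_c(x)$ plus a multiple of $x^{1/s}$ plus the remainder $x^{1/s}\int_x^{x_0}z^{-1/s}\gamma_c'(z)\,{\rm d}z\le(\max_{[0,x_0]}\gamma_c')\,B_s(x,x_0)$, using that $\gamma_c'=P'/\sqrt{1+4P}$ is continuous, hence bounded, on $[0,x_0]$. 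Enlarging $C$ to absorb $\max_{[0,x_0]}\gamma_c'$ gives (\ref{eqn:refinedestimates}). Finally, $P(0)=0$ forces $\gamma_c(x)\to0$ as $x\to0$, while $Cx^{1/s}\to0$ and $B_s(x,x_0)\to0$ by (\ref{eqn:def_fs}), so the two-sided estimate squeezes $\gamma_1(x)\to0$.

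The step I expect to be the main obstacle is the dichotomy at $\gamma_c$: the whole proof hinges on a sharp enough reason why crossing below $\gamma_c$ is fatal, and the inequality $\gamma_1^2+\gamma_1<P(x_1)\le P(x)$ has to be arranged so that it works uniformly --- in particular when $P$ is merely bounded, where no coercive $-P(x)/(sx)$ term is available --- and so that it tolerates initial contact with $\gamma_c$. The backward-in-$x$ half of the supersolution comparison, needed for the estimates and the limit at $x=0$ to hold down to the origin, requires the same sort of care indicated in the preliminary remark.
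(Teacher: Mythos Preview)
Your proof is correct and follows essentially the same route as the paper: the dichotomy ``touching or falling below $\gamma_c$ implies finite-$x$ death'' is exactly the paper's Lemma~\ref{lemmaD} (you phrase it via $\tfrac12(\gamma_1^2)'\le-\varepsilon_1/(sx)$ rather than $\gamma_1'\le -R/x$, but the content is identical), and your upper estimate on $[0,x_0]$ via $P/\gamma_1<1+\gamma_c$ followed by integration by parts is precisely the computation of Lemma~\ref{lemma:continuation0}, equations \eqref{lemmaBest1}--\eqref{IBP}.

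The one genuine organizational difference is how you pin down $\gamma_1^\star(x_0)$. The paper defines the ``bad'' set $I(x_0)=\{\gamma_1(x_0)>\gamma_c(x_0):\gamma_1\text{ eventually drops below }\gamma_c\}$, shows it is open by continuous dependence, nonempty by Lemma~\ref{lemmaC}, bounded by Theorem~\ref{thm:dichotomy}, and takes $\gamma_1^\star(x_0)=\sup I(x_0)$; openness then forces the separatrix to stay above $\gamma_c$. You instead take $\gamma_1^\star(x_0)=\inf G$ and argue that $G$ is closed via a monotone-limit / Arzel\`a--Ascoli argument, the limit being trapped above $\gamma_c$ by the dichotomy. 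Your version has the mild advantage that it never needs the perturbative Lemma~\ref{lemmaC} (that initial data just above $\gamma_c(x_0)$ still cross): strict inequality $\gamma_1^\star(x_0)>\gamma_c(x_0)$ falls out automatically because $\inf G\in G$ while $\gamma_c(x_0)\notin G$. The paper's version is marginally lighter in that openness of $I(x_0)$ is a one-line consequence of continuous dependence, whereas your closedness step requires the compactness argument. Either way the two arguments are equivalent reformulations of the same shooting picture.
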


\begin{proof}
The technical details of the proof will be given in Section
\ref{sec:technical} below. We first note that solutions can have
at most one global maximum, and no local minima, and that the global
maximum can only occur on the nullcline $\gamma_c(x)$ as defined
in (\ref{eqn:defnull}). Namely, if $x^{\star}$ is an extremum,
then
\begin{equs}
\gamma_1(x^{\star})=\gamma_c(x^{\star})~,~~~\gamma_1'(x^{\star})=
0~~~\mbox{ and }~~~
\gamma_1''(x^{\star})=-\frac{P'(x^{\star})}{s x^{\star}\gamma_c(x^{\star})}
<0~.
\label{eqn:stuff}
\end{equs}
These relations have the following consequences. First, solutions
of (\ref{DSeqn}) with $\gamma_1(x_0)\geq\gamma_c(x_0)$ cannot
have a global maximum at some $x_1<x_0$, nor a local minimum at
such a point, and hence must decay monotonically to $0$ as
$x\to0$, while satisfying $\gamma_1(x)>\gamma_c(x)$ for all
$x\in[0,x_0]$. Second, solutions of (\ref{DSeqn}) with
$\gamma_1(x_0)<\gamma_c(x_0)$ will have a global maximum at some
$x_1<x_0$, and will then decay monotonically to $0$ as
$x\to0$, while satisfying $\gamma_1(x)>\gamma_c(x)$ for all
$x\in[0,x_1]$ by the above argument. In particular, all solutions
of (\ref{DSeqn}) can be continued as $x\to0$ and more refined
arguments (see Lemma \ref{lemma:continuation0}) show that they
satisfy (\ref{eqn:refinedestimates}) for all $x\in[0,x_0]$.

The relations (\ref{eqn:stuff}) also show that a
solution that satisfies $\gamma_1(x_0)<\gamma_c(x_0)$ must
decrease monotonically for all $x\geq x_0$, and more refined
arguments (see Lemma \ref{lemmaD}) will show that those
solutions indeed satisfy $\gamma_1(x_1)=0$ for some finite
$x_1>x_0$ and thus cannot be continued as $x\to\infty$.

Furthermore, since $\gamma_c$ is itself monotonically increasing,
solutions that start with $\gamma_1(x_0)=\gamma_c(x_0)+\epsilon$
with $\epsilon\ll1$ necessarily cross the nullcline at some
$x>x_0$, and thus also cannot be continued indefinitely as
$x\to\infty$, see Lemma \ref{lemmaC} below. On the other hand,
when (\ref{condition}) holds, Theorem \ref{thm:dichotomy} shows that
there are large enough initial conditions whose corresponding
solutions can be continued as $x\to\infty$, and thus never
cross the nullcline.

By the above arguments, continuity of solutions with respect to
initial conditions and equation (\ref{eqn:diff}),
\begin{equs}
{\rm I}(x_0) = \{\gamma_1(x_0) > \gamma_c(x_0) ~|~\exists x>
x_0~~\mbox{ with }~~~
\gamma_1(x) < \gamma_c(x) \}
\end{equs}
is a single open, bounded, non-empty interval. Define now
$\gamma_1^{\star}(x_0)$ as the supremum of ${\rm
I}(x_0)$. From (\ref{eqn:diff}), no solution starting below
$\gamma_1^{\star}(x_0)$ can exist globally, and all solutions
starting above must stay above the solution corresponding to
$\gamma_1^{\star}(x_0)$, and (\ref{eqn:basic}) then implies
(\ref{eqn:refinedestimates}) as $x\geq x_0$.
\end{proof}

We conclude this section with the following corollary about the
growth of global solutions as $x\to\infty$.

\begin{corollary}
\label{cor:growth}
Let $s>0$ and assume that $P$ satisfies (\ref{condition}).
Then every global solution of (\ref{DSeqn}) with
$\gamma_1(x_0)>\gamma_1^{\star}(x_0)$ satisfies $C_1
~x^{\frac{1}{s}}\leq\gamma_1(x)\leq
C_2~x^{\frac{1}{s}}$ as $x\to\infty$ for some $0<C_1<C_2$, while the
separatrix itself satisfies
\begin{equs}[2]
\gamma_c(x)<
\gamma_1^{\star}(x)&\leq
\min\myl{12}\lim_{x\to\infty}\gamma_c(x)~,~
C~x^{\frac{1}{s}}
\myr{12}
\end{equs}
for some $C>0$. In particular, if
$\displaystyle\lim_{x\to\infty}P(x)<\infty$, the separatrix is
the only global bounded solution of (\ref{DSeqn}).
\end{corollary}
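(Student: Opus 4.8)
The plan is to read off the two-sided bound for the non-separatrix global solutions from estimates already in hand, and to locate the separatrix itself by an explicit comparison argument. For the upper bounds, note that (\ref{eqn:basic}) applied to any global solution $\gamma_1$ gives $\gamma_1(x)\le(x/x_0)^{1/s}(1+\gamma_1(x_0))$ for all $x\ge x_0$, which is the claimed bound with $C_2=(1+\gamma_1(x_0))x_0^{-1/s}$, and the same inequality applied to $\gamma_1^{\star}$ gives the $Cx^{1/s}$ half of the bound on the separatrix. (Equivalently one can invoke (\ref{eqn:refinedestimates}) after observing that (\ref{condition}) together with H2 forces $P(x)=o(x^{2/s})$: since $P$ is increasing, $P(x)\int_x^{2x}z^{-1-2/s}{\rm d}z\le\int_x^{2x}P(z)z^{-1-2/s}{\rm d}z\to0$, the left factor being a positive constant times $x^{-2/s}$, so that $\gamma_c(x)\sim\sqrt{P(x)}=o(x^{1/s})$.)

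For the lower bound I would compare $\gamma_1$ with the separatrix $\gamma_1^{\star}$ through the exact difference formula (\ref{eqn:diff}). Both solutions are global and hence strictly positive, so (\ref{eqn:diff}) applies, and since the integrand appearing there is at least $\tfrac{1}{sz}$,
\[
\gamma_1(x)-\gamma_1^{\star}(x)=\bigl(\gamma_1(x_0)-\gamma_1^{\star}(x_0)\bigr)\exp\left(\int_{x_0}^{x}\left[\frac{1}{sz}+\frac{P(z)}{sz\,\gamma_1(z)\gamma_1^{\star}(z)}\right]{\rm d}z\right)\ge\bigl(\gamma_1(x_0)-\gamma_1^{\star}(x_0)\bigr)\left(\frac{x}{x_0}\right)^{1/s}.
\]
Because $\gamma_1^{\star}(x)>\gamma_c(x)>0$ this yields $\gamma_1(x)\ge C_1x^{1/s}$ with $C_1=(\gamma_1(x_0)-\gamma_1^{\star}(x_0))x_0^{-1/s}>0$, and clearly $C_1<C_2$.

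It remains to bound the separatrix by $\lim_{x\to\infty}\gamma_c(x)$; the inequality $\gamma_c(x)<\gamma_1^{\star}(x)$ is already part of Theorem \ref{thm:separatrixintro}. If $\lim_{x\to\infty}\gamma_c(x)=\infty$ there is nothing to prove, so assume $\gamma_c(\infty):=\sup_{x>0}\gamma_c(x)<\infty$ (recall $\gamma_c$ is increasing). Fix $\xi>0$ and $\epsilon>0$ and let $\tilde\gamma$ solve (\ref{DSeqn}) with $\tilde\gamma(\xi)=\gamma_c(\infty)+\epsilon$. Since $\gamma_c(x)\le\gamma_c(\infty)<\tilde\gamma(\xi)$ for every $x$, this solution starts strictly above the nullcline, so by the relations (\ref{eqn:stuff}) it stays above $\gamma_c$ and increases for $x\ge\xi$; then (\ref{eqn:basic}) rules out blow-up, so it exists for all $x\ge\xi$, and the continuation-to-zero argument of Theorem \ref{thm:separatrixintro} extends it to $(0,\xi]$. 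Hence $\tilde\gamma$ is a global solution, so $\gamma_1^{\star}(\xi)\le\tilde\gamma(\xi)=\gamma_c(\infty)+\epsilon$, and letting $\epsilon\downarrow0$ gives $\gamma_1^{\star}(\xi)\le\gamma_c(\infty)$ for the arbitrary $\xi$. The final statement is then immediate: if $\lim_{x\to\infty}P(x)<\infty$ then $\gamma_c(\infty)<\infty$, so the separatrix is bounded, while every other global solution satisfies $\gamma_1(x)\ge C_1x^{1/s}\to\infty$ and is unbounded.

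I expect the third step to be the main obstacle. The first two are essentially a repackaging of (\ref{eqn:basic}), (\ref{eqn:diff}) and Theorem \ref{thm:separatrixintro}, but proving $\gamma_1^{\star}\le\lim\gamma_c$ genuinely requires constructing global comparison solutions that sit just above the limiting nullcline and verifying carefully that they extend all the way down to $x=0$, so that the notion of ``global'' used in Theorem \ref{thm:separatrixintro} actually applies to them; this is where the monotonicity relations (\ref{eqn:stuff}) and the continuation lemmas quoted in that theorem do the real work.
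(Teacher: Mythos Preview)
Your proof is correct and follows essentially the same line as the paper's: the upper and lower bounds for a non-separatrix global solution come from (\ref{eqn:basic}) and (\ref{eqn:diff}) exactly as you wrote, and the bound $\gamma_1^{\star}\le\lim_{x\to\infty}\gamma_c(x)$ is obtained by constructing, for each base point, a global comparison solution starting at (or just above) the limiting nullcline value and invoking the minimality of the separatrix. The only cosmetic difference is that the paper takes the initial value equal to $\gamma_\infty$ itself and checks directly that the derivative there is strictly positive (since $P(x_0)<\lim_{x\to\infty}P(x)$), while you start at $\gamma_\infty+\epsilon$ and let $\epsilon\downarrow0$; either variant works.
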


\begin{proof}
Let $\gamma_1(x_0)>\gamma_{1}^{\star}(x_0)$, and consider the
corresponding solution of (\ref{DSeqn}). The upper bound
$\gamma_1(x)\leq C_2~x^{\frac{1}{s}}$ follows immediately from
(\ref{solution}). For the lower bound, we note that from
(\ref{eqn:diff}), we
have
\begin{equs}
\gamma_1(x)&\geq\gamma_1^{\star}(x)+
\myl{10}
\gamma_1(x_0)-\gamma_1^{\star}(x_0)
\myr{10}
~\exp\myl{12}
\int_{x_0}^{x}
\frac{{\rm d}z}{sz}
\myr{12}\geq C_1~x^{\frac{1}{s}}
~,
\end{equs}
for some $C_1>0$ since $\gamma_1(x_0)>\gamma_1^{\star}(x_0)$.

As for the separatrix itself, first note that the lower bound is
already contained in Theorem \ref{thm:separatrixintro}. If
$\lim_{x\to\infty}P(x)=\infty$, the upper bound
$\gamma_1^{\star}(x)\leq C~x^{\frac{1}{s}}$ follows again from
(\ref{solution}). If $\lim_{x\to\infty}P(x)<\infty$, we first set
$\gamma_{\infty}=\lim_{x\to\infty}\gamma_c(x)<\infty$. Consider
then $\gamma_1(x_0)=\gamma_{\infty}$. The corresponding solution
$\gamma_1(x)$ of (\ref{DSeqn}) must initially increase above
$\gamma_{\infty}$ for $x$ sufficiently close to $x_0$ since
\begin{equs}
\frac{{\rm d}\gamma_1}{{\rm d}x}
\my{|}{12}_{x=x_0}=\frac{\gamma_{\infty}+\gamma_{\infty}^2-P(x_0)
}{s x_0\gamma_{\infty}}
=\frac{\lim_{x\to\infty}P(x)-P(x_0)}{sx_0\gamma_{\infty}}>0~.
\end{equs}
Once the solution is above $\gamma_{\infty}$, it cannot have a
local maximum at an $x>x_0$ and hence can be continued as
$x\to\infty$. If $\lim_{x\to\infty}P(x)<\infty$, we thus have a
one parameter family of global solutions, indexed by $x_0$, the
point at which $\gamma_1(x_0)=\gamma_{\infty}$. Since the
separatrix $\gamma_{1}^{\star}$ is the smallest global solution,
we get $\gamma_1^{\star}(x)\leq\gamma_{\infty}$ for all $x>0$,
which concludes the proof.
\end{proof}

\section{The running coupling}
\label{sec:running}

We now interpret the above analysis in view of the running of the
coupling constant. With appropriate conventions this introduces
the second differential equation
\begin{equs}
  \frac{{\rm d} x }{{\rm d} L} = \beta(x(L))~.
    \label{eqn:easfofL}
\end{equs}
In the single equation case, combining (\ref{eqn:easfofL}) with
(\ref{DSeqn}), we obtain the following system
\begin{equs}
\frac{{\rm d} \gamma_1}{{\rm d} L} & = \gamma_1+\gamma_1^2- P~,~~~~
\frac{{\rm d} x }{{\rm d} L} = s~x~\gamma_1~,
\label{eqn:systemohmysystem}
\end{equs}
which we supplement with initial conditions at $L=0$:
\begin{equs}
x(L=0)=x_0~~~\mbox{and}~~~\gamma_1(L=0)=\gamma_1(x_0)~.
\end{equs}
Before considering the fate of non-global solutions of
(\ref{DSeqn}), we first explain how (almost all) global solutions
of (\ref{DSeqn}) are Landau poles.

\begin{theorem}
Assume that $P$ is a ${\cal C}^2$, positive, everywhere
increasing function that satisfies (\ref{condition}). The
separatrix $\gamma_1^{\star}$ is a Landau pole if and only if
\begin{equs}
{\cal L}(P)=
\int_{x_0}^{\infty}
\frac{{\rm d}z}{z~\gamma_c(z)}=
\int_{x_0}^{\infty}
\frac{2{\rm d}z}{z(\sqrt{1+4P(z)}-1)}<\infty~.
\end{equs}
All other global solutions of (\ref{DSeqn}) are Landau poles,
irrespective of the value of ${\cal L}(P)$.
\end{theorem}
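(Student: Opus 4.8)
The plan is to translate the statement about Landau poles into a statement about the finiteness of the improper integral $L_{\max}=\int_0^{x_{\max}} \frac{{\rm d}x}{\beta(x)}=\int_0^{x_{\max}} \frac{{\rm d}x}{s\,x\,\gamma_1(x)}$, since by \reff{eqn:easfofL} a global solution $\gamma_1(x)$ is a Landau pole precisely when the coupling $x$ reaches $\infty$ (or rather $x_{\max}=\sup$ of the interval of existence, which is $+\infty$ here for global solutions) in finite renormalization-group time $L$. Thus for each global solution I must decide whether $\int^{\infty}\frac{{\rm d}x}{x\,\gamma_1(x)}$ converges. Since $\gamma_1>0$ on the whole half-line for a global solution, the integrand is positive and the only issue is the behavior as $x\to\infty$ (the lower endpoint $x\to 0$ is harmless because $\gamma_1(x)/x\to$ a nonzero constant or, more crudely, because $\gamma_1(x)\geq\gamma_c(x)\sim P(x)$ near $0$ and $P(x)>0$; in any case that end is governed by local existence, not by a Landau pole at finite $L$, and one can simply cut the integral at $x_0$).

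First I would dispose of the non-separatrix global solutions. By Corollary \ref{cor:growth}, any global solution with $\gamma_1(x_0)>\gamma_1^{\star}(x_0)$ satisfies $\gamma_1(x)\geq C_1 x^{1/s}$ as $x\to\infty$. Hence $\frac{1}{s\,x\,\gamma_1(x)}\leq \frac{1}{s\,C_1}\,x^{-1-1/s}$, which is integrable at $+\infty$ for every $s>0$. Therefore $L_{\max}<\infty$ and every such solution is a Landau pole, with no condition on ${\cal L}(P)$; this proves the last sentence of the theorem. Next I turn to the separatrix. The key comparison is the lower bound $\gamma_1^{\star}(x)>\gamma_c(x)$ from Theorem \ref{thm:separatrixintro}, which immediately gives
\begin{equs}
\int_{x_0}^{\infty}\frac{{\rm d}x}{s\,x\,\gamma_1^{\star}(x)}
<\frac{1}{s}\int_{x_0}^{\infty}\frac{{\rm d}x}{x\,\gamma_c(x)}
=\frac{1}{s}\,{\cal L}(P)~,
\end{equs}
so ${\cal L}(P)<\infty$ forces $L_{\max}<\infty$, i.e. the separatrix is a Landau pole. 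In particular, when $\lim_{x\to\infty}P(x)<\infty$ we have $\gamma_c(x)\to\gamma_\infty>0$, so $\frac{1}{x\gamma_c(x)}\sim\frac{1}{\gamma_\infty x}$ and ${\cal L}(P)=\infty$; this is the borderline case in which, as the theorem asserts, the separatrix is \emph{not} a Landau pole, consistent with Corollary \ref{cor:growth} telling us it is the unique bounded global solution.

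The remaining, and main, task is the converse for the separatrix: if ${\cal L}(P)=\infty$ then $\int_{x_0}^\infty\frac{{\rm d}x}{x\,\gamma_1^{\star}(x)}=\infty$, so the coupling needs infinite $L$ to diverge and there is no Landau pole. For this a matching \emph{upper} bound on $\gamma_1^{\star}$ of the form $\gamma_1^{\star}(x)\leq K\,\gamma_c(x)$ for large $x$ would suffice, since then $\frac{1}{x\gamma_1^{\star}(x)}\geq\frac{1}{K}\frac{1}{x\gamma_c(x)}$ and the integral diverges with ${\cal L}(P)$. I expect this upper bound to be the crux of the argument. The idea is that the separatrix hugs the nullcline: it stays above $\gamma_c(x)$ but cannot stray arbitrarily far above it, for otherwise — using that $\gamma_c$ is increasing and the sign analysis of \reff{eqn:stuff} (solutions above the nullcline by a fixed margin eventually recross it) together with continuity in initial conditions — one could lower the initial datum slightly and still remain above $\gamma_c$, contradicting minimality of $\gamma_1^{\star}$. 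Quantitatively, one rewrites \reff{DSeqn} for $w=\gamma_1^2$, getting $\tfrac12 (x^{-2/s}w)' = x^{-1-2/s}(\gamma_1-P)$, and uses that on the separatrix the competition between the $\gamma_1$ and $-P$ terms must balance in the mean (the integral $\int^\infty x^{-1-2/s}(\gamma_1^{\star}-P)\,{\rm d}x$ converges, by \reff{condition} and the $x^{1/s}$ a priori bound), which pins $\gamma_1^{\star}$ to order $\gamma_c\sim\sqrt{P}$ up to a multiplicative constant. Carrying out this balance rigorously — presumably via the auxiliary lemmas (Lemma \ref{lemmaC}, Lemma \ref{lemmaD}, Lemma \ref{lemma:continuation0}) deferred to Section \ref{sec:technical} — is where the real work lies; the two easy inclusions above are essentially immediate from results already in hand.
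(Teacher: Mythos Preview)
Your reduction to the finiteness of $\int_{x_0}^{\infty}\frac{{\rm d}z}{s\,z\,\gamma_1(z)}$ is exactly the paper's starting point, and your treatment of the two easy directions coincides with the paper's: the non-separatrix global solutions are handled via the growth $\gamma_1(x)\geq C_1 x^{1/s}$ from Corollary~\ref{cor:growth}, and the implication ``${\cal L}(P)<\infty\Rightarrow$ separatrix is a Landau pole'' follows from the lower bound $\gamma_1^{\star}>\gamma_c$ just as you write.

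The divergence, and the only real gap, is in the hard direction ``${\cal L}(P)=\infty\Rightarrow$ separatrix is \emph{not} a Landau pole''. You aim for $\gamma_1^{\star}(x)\leq K\,\gamma_c(x)$, but this target is likely too strong: under ${\cal L}(P)=\infty$ the function $P$ can grow as slowly as, say, $P(x)\sim\ln x$, so that $\gamma_c(x)\sim\sqrt{\ln x}$, and there is no obvious mechanism forcing the separatrix to stay within a constant multiple of such a slowly varying nullcline. Your sketch via $w=\gamma_1^2$ only shows that $\int^{\infty}x^{-1-2/s}(\gamma_1^{\star}-P)\,{\rm d}x$ converges, which follows already from the a~priori bound $\gamma_1^{\star}\leq Cx^{1/s}$ together with \eqref{condition} and does not by itself pin $\gamma_1^{\star}$ to size $\gamma_c$. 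The paper does something different and more robust: it first extracts from ${\cal L}(P)=\infty$ the growth restrictions $P(x)\leq C_1(\ln x)^4$ and $xP'(x)/\sqrt{1+4P(x)}\leq C_2\ln x$, and then shows by a direct barrier computation that for large $x_0$ the curve $\overline{\gamma_1}(x)=\gamma_c(x)+\overline{C}\ln x$ is a supersolution (solutions launched on it move upward relative to it), hence the separatrix, being minimal among global solutions, satisfies $\gamma_1^{\star}(x)\leq\gamma_c(x)+\overline{C}\ln x$. The conclusion then uses the elementary splitting
\[
\int_{x_0}^{\infty}\frac{{\rm d}z}{z\,\gamma_1^{\star}(z)}
\;\geq\;\tfrac12\min\Bigl(\int_{x_0}^{\infty}\frac{{\rm d}z}{z\,\gamma_c(z)},\ \int_{x_0}^{\infty}\frac{{\rm d}z}{z\,\overline{C}\ln z}\Bigr)=\infty,
\]
since both integrals diverge. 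So the paper's upper bound is \emph{additive} in $\ln x$, not multiplicative in $\gamma_c$, and the divergence of $\int{\rm d}z/(z\ln z)$ is what closes the argument. Your outline would be repaired by replacing the conjectured $K\gamma_c$ bound with this $\gamma_c+\overline{C}\ln x$ barrier and the accompanying comparison.
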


\begin{proof}
We first note that global solutions of (\ref{DSeqn}) give solutions of
(\ref{eqn:systemohmysystem}) via the reparametrization
\begin{equs}
L=\int_{x_0}^{x(L)}
{\frac {{\rm d}z}{s~z~\gamma_1(z)}}~.
\end{equs}
In particular, a global solution of (\ref{eqn:systemohmysystem}) with
$x(L=0)=x_0$ and $\gamma_1(L=0)=\gamma_1(x_0)$ reaches $x=\infty$
at
\begin{equs}
L^{\star}=\int_{x_0}^{\infty}
{\frac {{\rm d}z}{s~z~\gamma_1(z)}}~.
\label{eqn:popoles}
\end{equs}
From Corollary (\ref{cor:growth}), we know that any global
solution of (\ref{DSeqn}) that is not the separatrix grows at
least like $x^{\frac{1}{s}}$ as $x\to\infty$. In particular, 
the integral in the r.h.s.~of (\ref{eqn:popoles}) converges to
some finite $L^{\star}$, and $\gamma_1(L)$ diverges as $L\to
L^{\star}$, signaling that this solution is a Landau
pole. By Corollary \ref{cor:growth}, the separatrix is
also a Landau pole if $P(x)$ grows fast enough as $x\to\infty$ so
that ${\cal L}(P)<\infty$.

If $\lim_{x\to\infty}P(x)<\infty$, then Corollary
(\ref{cor:growth}) shows that
$\gamma_1^{\star}\leq\lim_{x\to\infty}\gamma_c(x)<\infty$, which
makes the integral in the r.h.s.~of (\ref{eqn:popoles})
divergent. In particular, the separatrix is the only global
solution of (\ref{DSeqn}) that {\em is not} a Landau pole when
written in terms of the running coupling $L$. In section
\ref{sec:technical} below, we will show that this actually holds
not only if $\lim_{x\to\infty}P(x)<\infty$ but also for all $P$
that grow sufficiently slowly as $x\to\infty$ so that ${\cal
L}(P)=\infty$.
\end{proof}

Consider now $\gamma_1(x)$ a solution of (\ref{DSeqn}) that only
exist on maximal finite interval $x\in[0,x^{\star}]$. By the
results of section \ref{main}, we necessarily have
$\gamma_1(x^{\star})=0$. As is apparent from
(\ref{eqn:systemohmysystem}), the introduction of the running
coupling removes the singularity of (\ref{DSeqn}) at
$\gamma_1=0$. There is thus a 1-1 correspondence between
solutions of (\ref{DSeqn}) that exist only on finite intervals
with the family of solutions of (\ref{eqn:systemohmysystem}) with
$x(L=0)=x^{\star}$ and $\gamma_1(L=0)=0$. More precisely, we have
the
\begin{theorem}
For each $x^{\star}>0$, there is a unique solution of
(\ref{eqn:systemohmysystem}) that satisfies $x(L=0)=x^{\star}$
and $\gamma_1(L=0)=0$. This solution is an heteroclinic orbit
of the system (\ref{eqn:systemohmysystem}) connecting the two
equilibrium points $(x,\gamma_1)=(0,0)$ at $L=-\infty$ to
$(x,\gamma_1)=(0,-1)$ at $L=\infty$.
\end{theorem}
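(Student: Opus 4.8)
The plan is the following. Since $P$ is ${\cal C}^2$, the right-hand side of (\ref{eqn:systemohmysystem}) is ${\cal C}^1$ on all of $\real^2$, so by the Picard--Lindel\"of theorem there is a unique maximal solution $(x(L),\gamma_1(L))$ with $x(0)=x^{\star}$ and $\gamma_1(0)=0$, which is already the asserted uniqueness. From $\gamma_1'(0)=-P(x^{\star})<0$ and the fact that at any point with $\gamma_1=0$, $x>0$ one has $\gamma_1'=-P(x)<0$, the orbit cannot re-cross $\{\gamma_1=0\}$, so $\gamma_1(L)<0$ for $L>0$ and $\gamma_1(L)>0$ for $L<0$. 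Hence $x'=sx\gamma_1$ keeps a fixed sign on each half-line, so $x$ is strictly decreasing for $L>0$, strictly increasing for $L<0$, with $0<x(L)\le x^{\star}$ (and $x(L)=x^{\star}\exp(s\int_0^L\gamma_1)$ never vanishes for finite $L$). Finally, $\{x=0\}$ is invariant and on it the system reduces to $\gamma_1'=\gamma_1(\gamma_1+1)$, whose only equilibria are $(0,0)$ and $(0,-1)$.

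Next I would prove global existence via a priori bounds. Let $-m<-1<0<M$ be the two roots of $\gamma^2+\gamma-P(x^{\star})=0$. Since $0\le P(x(L))\le P(x^{\star})$ (this is where hypothesis H2 is used), the quantity $\gamma_1^2+\gamma_1-P(x(L))=\gamma_1'$ is $\ge0$ whenever $\gamma_1\le -m$ or $\gamma_1\ge M$; consequently $\{\gamma_1\ge -m\}$ is forward invariant and $\{\gamma_1\le M\}$ is backward invariant along the orbit, which together with the signs above yields $\gamma_1(L)\in[-m,0]$ for $L\ge0$ and $\gamma_1(L)\in[0,M]$ for $L\le0$. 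The orbit therefore remains in a fixed compact subset of $\real^2$ on each time half-line, hence is defined for all $L\in\real$.

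For $L\to+\infty$: $x(L)$ is monotone and bounded, so $x(L)\to a\ge0$, and the $\omega$-limit set $\Omega$ of the orbit is nonempty, compact, connected, invariant, and contained in $\{x=a\}$. If $a>0$, invariance would force $\Omega\subseteq\{x=a\}\cap\{x'=0\}=\{(a,0)\}$, which is impossible since $(a,0)$ is not an equilibrium ($\gamma_1'=-P(a)\neq0$); hence $a=0$ and $\Omega\subseteq\{0\}\times[-m,0]$. On $\{x=0\}$ the reduced flow $\gamma_1'=\gamma_1(\gamma_1+1)$ admits as compact connected invariant subsets only $\{(0,0)\}$, $\{(0,-1)\}$ and the closed segment joining them; but along our orbit $\gamma_1$ cannot cross $\gamma_1=-1$ upward, since there $\gamma_1'=-P(x(L))<0$ for every finite $L$. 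Thus either $\gamma_1$ stays in $(-1,0)$ for all $L>0$ --- where $\gamma_1'=\gamma_1(\gamma_1+1)-P(x(L))<0$, so $\gamma_1$ decreases and converges --- or $\gamma_1<-1$ for all large $L$ and $\Omega\subseteq\{0\}\times[-m,-1]$. In both cases $\Omega$ reduces to a single equilibrium, which cannot be $(0,0)$ because $\gamma_1<0$; therefore $(x(L),\gamma_1(L))\to(0,-1)$.

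For $L\to-\infty$ the identical monotonicity-plus-invariance argument gives $x(L)\to0$, and on $\{x=0\}$ one has $\gamma_1'=\gamma_1(\gamma_1+1)>0$ for $\gamma_1>0$, so the only compact invariant subset of $\{0\}\times[0,M]$ is $\{(0,0)\}$; hence $(x(L),\gamma_1(L))\to(0,0)$. (Alternatively, reparametrizing the backward orbit by $x$ via ${\rm d}L/{\rm d}x=1/(sx\gamma_1)$ turns it into a solution of (\ref{DSeqn}) lying below the nullcline $\gamma_c$ of (\ref{eqn:defnull}) for $x$ near $x^{\star}$, which by Theorem \ref{thm:separatrixintro} extends to $x=0$ with $\gamma_1\to0$.) This exhibits the orbit as a heteroclinic connection from $(0,0)$ at $L=-\infty$ to $(0,-1)$ at $L=+\infty$. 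The step I expect to be the main obstacle is the $L\to+\infty$ analysis --- controlling $\gamma_1$ near the line $\gamma_1=-1$ and upgrading the soft $\omega$-limit classification into the sharp statement $\gamma_1\to-1$ --- together with making the a priori bounds of the second paragraph fully rigorous without assuming strict monotonicity of $P$.
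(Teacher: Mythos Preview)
Your proof is correct and takes a genuinely different route from the paper's. The paper treats the two limits by rather different means: for $L\to-\infty$ it reparametrizes the backward orbit by $x$, lands back in the non-autonomous equation (\ref{DSeqn}), and invokes the continuation-to-zero result (Theorem~\ref{thm:separatrixintro} / Lemma~\ref{lemma:continuation0}) to get $\gamma_1(x)\sim\gamma_c(x)\to0$; for $L\to+\infty$ it builds an explicit trapping rectangle $[0,x^{\star}]\times[-1-\gamma_c(x^{\star}),\gamma_1(L^{+})]$ on which the vector field points inward, then reads off exponential bounds $x(L^{+})e^{-c_1(L-L^{+})}\le x(L)\le x(L^{+})e^{-c_2(L-L^{+})}$ and concludes from the flow on $\{x=0\}$.

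Your approach is more uniform and more in the spirit of planar dynamical systems: a single a priori box $[0,x^{\star}]\times[-m,M]$ (coming from the roots of $\gamma^2+\gamma-P(x^{\star})$, using H2 via $P(x(L))\le P(x^{\star})$) gives global existence, and then the monotonicity of $x$ forces both the $\alpha$- and $\omega$-limit sets onto the invariant axis $\{x=0\}$, where the reduced flow $\gamma_1'=\gamma_1(\gamma_1+1)$ has only $\{(0,0)\}$, $\{(0,-1)\}$, and the closed segment joining them as compact connected invariant subsets; the sign of $\gamma_1$ on each half-line then singles out the correct equilibrium. This is cleaner and self-contained --- it does not call on the machinery of Section~\ref{main} or the integrability condition (\ref{condition}), only H1 and H2. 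What the paper's trapping-box argument buys, and yours does not directly, is an explicit exponential rate of convergence of $x(L)$ to zero.

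On your stated worries: the ``soft'' $\omega$-limit classification is already sharp enough. In your Case~1, $\gamma_1$ is strictly decreasing and $x\to0$, so the orbit converges to a single point $(0,\ell)$; invariance of $\Omega=\{(0,\ell)\}$ forces $\ell\in\{0,-1\}$, and $\gamma_1<0$ for $L>0$ excludes $\ell=0$. In your Case~2 the only compact invariant subset of $\{0\}\times[-m,-1]$ is $\{(0,-1)\}$, since backward orbits on $\{x=0\}$ with $\gamma_1<-1$ escape to $-\infty$. And your a priori bounds need only non-strict monotonicity of $P$: if $\gamma_1\ge M$ on some interval then $\gamma_1'\ge0$ there, so $\gamma_1$ is non-decreasing, which already prevents crossing $\{\gamma_1=M\}$ from below in backward time (and dually for $-m$ in forward time).
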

Note that this theorem implies that solutions of (\ref{DSeqn})
that exist only on finite intervals are actually double-valued
as functions of $x$, but exists for all $L\in{\bf R}$. Note that
in particular, such solutions come back to $x=0$ as a
dipole-ghost \cite{Nakanishi}: we gained a full integer in
scaling weight for the photon, see left panel of figure
\ref{fig:aroundthewolrdineightydays}.

\begin{proof}
Fix $x^{\star}>0$, and consider the solution of
(\ref{eqn:systemohmysystem}) that satisfies $x(L=0)=x^{\star}$ and
$\gamma_1(L=0)=0$. Note first that the vector field associated
with (\ref{eqn:systemohmysystem}) is perpendicular to the
$x$-axis, and crosses the $x=x^{\star}$ vertical line from left
to right above the $x$-axis, see also
the right panel of figure \ref{fig:aroundthewolrdineightydays}.
As a consequence, and by local existence of solutions of 
(\ref{eqn:systemohmysystem}), there exists a {\em finite} $L^{-}<0$
such that $\gamma_1(L^{-})>0$ and $0<x(L^{-})<x^{\star}$. By the
results of section (\ref{main}), the solution of (\ref{DSeqn})
with $x_0=x(L^{-})>0$ and $\gamma_1(x_0)=\gamma_1(x(L^{-}))>0$
can be extended up to $x=0$ and satisfies
$\gamma_1(x)\sim\gamma_c(x)$ as $x\to0$. In particular, the
solution of (\ref{eqn:systemohmysystem}) satisfying
$x(L=0)=x^{\star}$ and $\gamma_1(L=0)=0$ tends to
$(x,\gamma_1)=(0,0)$ as $L\to-\infty$ since
\begin{equs}
L=L^{-}-\int_{x(L)}^{x(L^{-})}
\frac{{\rm d}z}{s~z~\gamma_1(z)}
\to-\infty~~~\mbox{as}~~~x(L)\to0~.
\end{equs}
We now prove that $(x,\gamma_1)\to(0,-1)$ as $L\to\infty$.
Again, since the vector field associated with
(\ref{eqn:systemohmysystem}) is perpendicular to the
$x$-axis, and crosses the $x=x^{\star}$ vertical line
from right to left below the $x$-axis, there exists a finite
$L^{+}>0$ such that $-1<\gamma_1(L^{+})<0$ and
$0<x(L^{+})<x^{\star}$ (the value $\gamma_1(L^{+})$ is the dashed
line on the right panel of figure
\ref{fig:aroundthewolrdineightydays}). Note then that the vector
field points inside the rectangle
$R=[0,x^{\star}]\times[\gamma_1(L^{+}),-1-\gamma_c(x^{\star})]$,
except on the $\gamma_1$ axis where it is tangent and points
towards $(x,\gamma_1)=(0,-1)$, see also the right panel of figure
\ref{fig:aroundthewolrdineightydays}. It thus follows that
\begin{equs}
0\leq x(L)\leq x^{\star}~~~\mbox{and}~~~
-1-\gamma_c(x^{\star})\leq\gamma_1(L)\leq
\gamma_1(L^{+})~~~\forall L\geq{L^{+}}~.
\end{equs}
In particular,
\begin{equs}
-c_1~x\equiv
-sx(1+\gamma_c(x^{\star}))\leq \frac{{\rm d}x}{{\rm d}L}
\leq sx\gamma_1(x(L^{+}))
\equiv-c_2~x
~,
\end{equs}
for some $c_1,c_2>0$, and thus
\begin{equs}
x(L^{+})\ed^{-c_1(L-L^{+})}
\leq x(L)\leq
x(L^{+})\ed^{-c_2(L-L^{+})}~,
\end{equs}
which shows that $x(L)\to0$ as $L\to\infty$. Since the vector
field points towards $(x,\gamma_1)=(0,-1)$ on the 
$\gamma_1$ axis, it also follows that $\gamma_1(L)\to-1$ as
$L\to\infty$, and concludes the proof.
\end{proof}

\def\mystyle{\scriptstyle}
\begin{figure}
\begin{center}
\unitlength1mm
\begin{picture}(190,90)(0,0)
\put( 20,0){\epsfig{file=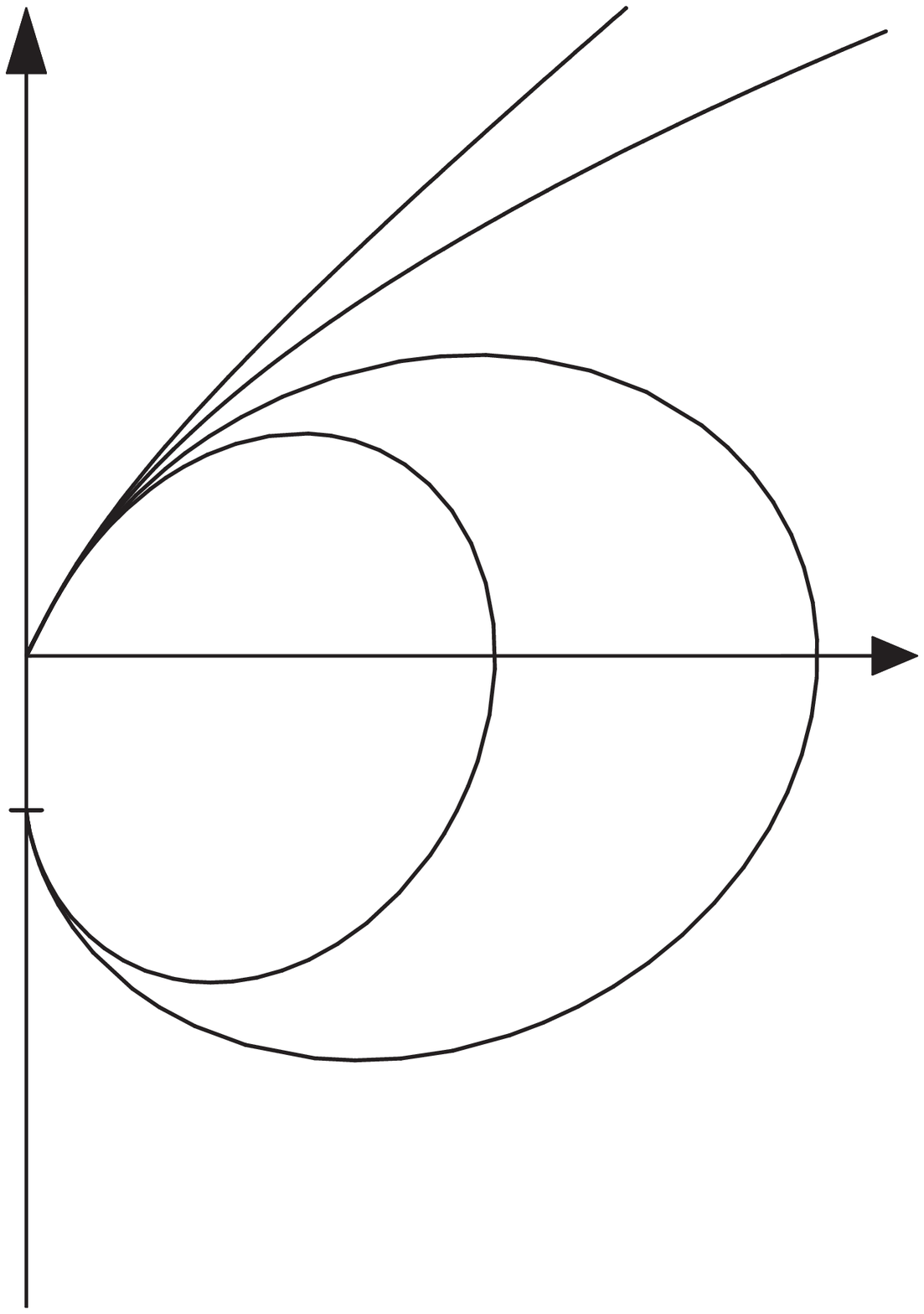,height=8.5cm}}
\put(100,0){\epsfig{file=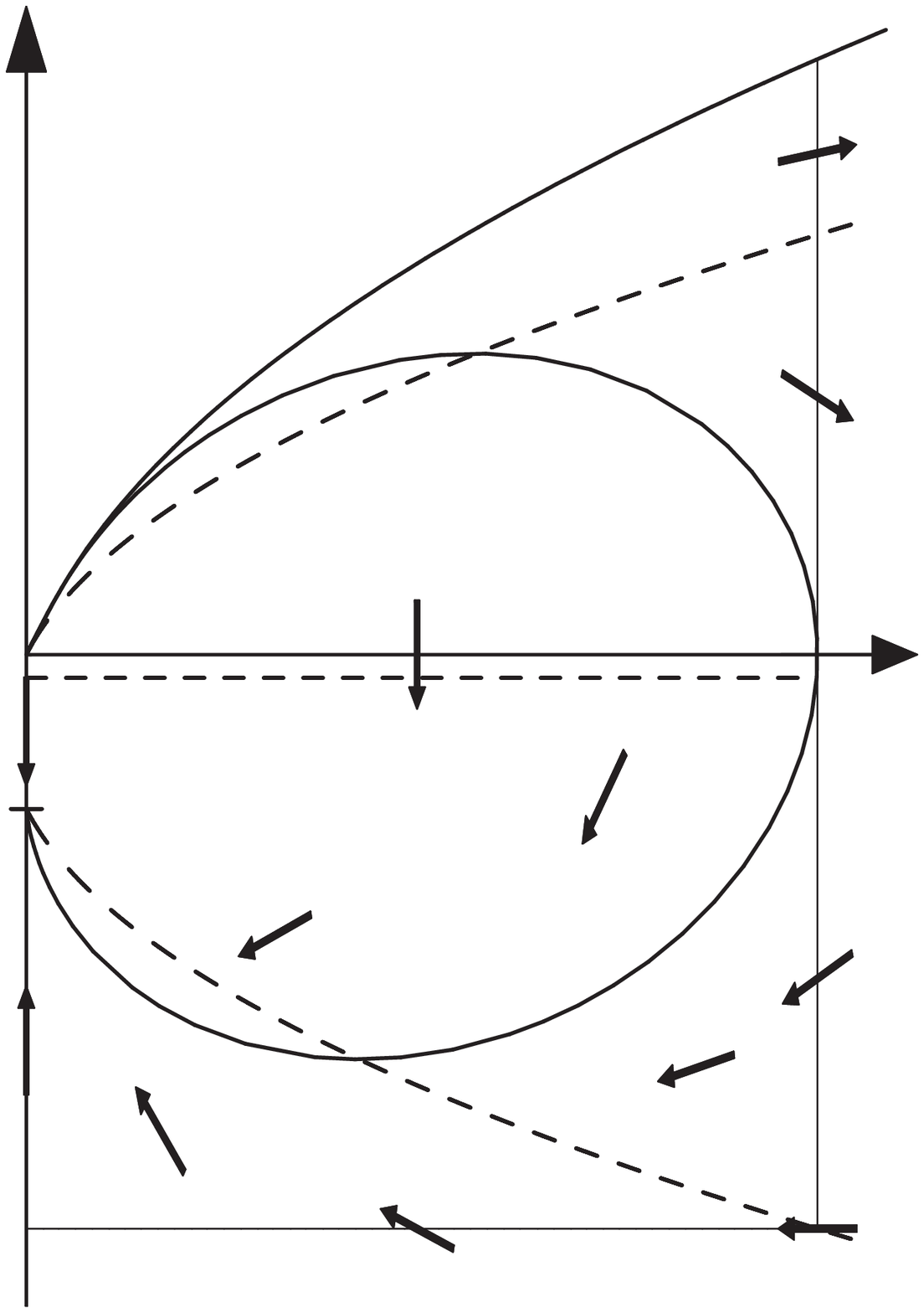,height=8.5cm}}
\put(153.5,44){$\mystyle x^{\star}$}
\put(73.5,44){$\mystyle x^{\star}$}
\put(158,39.5){$x$}
\put(78,39.5){$x$}
\put(95,83){$\gamma_1$}
\put(15,83){$\gamma_1$}
\put(136,79){$\mystyle\gamma_1^{\star}(x)$}
\put(156,72){$\mystyle\gamma_c(x)$}
\put(156,2){$\mystyle-1-\gamma_c(x)$}
\put(85,4.5){$\mystyle-1-\gamma_c(x^{\star})$}
\put(65,74){$\mystyle\gamma_1^{\star}(x)$}
\put(96,31.5){$\mystyle-1$}
\put(91,40.5){$\mystyle\gamma_1(L^{+})$}
\put(16,31.5){$\mystyle-1$}
\end{picture}
\end{center}
\caption{$P(x)=x$, $s=1$ illustrating that, as a function of $L$,
non-global solutions of (\ref{DSeqn}) turn around and head to
$-1$ as $L\rightarrow\infty$.}
\label{fig:aroundthewolrdineightydays}
\end{figure}

\section{Technical proofs}
\label{sec:technical}

This section contains the technical details needed for a complete
proof of Theorem \ref{thm:separatrixintro} above. Throughout this
section, we assume $P$ is a ${\cal C}^2$, positive, strictly
increasing function of $x$, which satisfies (\ref{condition}).

Our first step is to show that solutions that start below the
nullcline $\gamma_c(x_0)$ cannot be continued as $x\to\infty$. Note
that this does not follow directly from (\ref{eqn:stuff}), since
$\gamma_1(x)$ could a priori decrease indefinitely as
$x\to\infty$ without ever reaching $\gamma_1=0$.
\begin{lemma}\label{lemmaD}
Let $\gamma_1(x_0)<\gamma_c(x_0)$ then the solution of
(\ref{DSeqn}) satisfies $\gamma_1(x_1)=0$ for some finite $x_1>x_0$.
\end{lemma}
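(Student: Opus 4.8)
The plan is to show that once $\gamma_1$ drops below the nullcline, it is trapped strictly below it and cannot escape, and then to extract a quantitative decay rate forcing $\gamma_1$ to hit $0$ in finite $x$. First I would argue that $\gamma_1(x)<\gamma_c(x)$ for all $x\ge x_0$ on the interval of existence: by (\ref{eqn:stuff}) any extremum of $\gamma_1$ must lie \emph{on} the nullcline, and since $\gamma_c$ is strictly increasing, a solution starting below $\gamma_c(x_0)$ cannot cross $\gamma_c$ from below (at a crossing point $\gamma_1'$ would have to be at least $\gamma_c'>0$, but on the nullcline $\gamma_1'=0$). Hence $\gamma_1(x)<\gamma_c(x)$, equivalently $\gamma_1(x)+\gamma_1(x)^2-P(x)<0$, so by (\ref{DSeqn}) $\gamma_1$ is strictly decreasing for all $x\ge x_0$ as long as it remains positive.

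The remaining possibility to rule out is that $\gamma_1$ decreases monotonically to some limit $\ell\in[0,\gamma_c(x_0))$ as $x\to\infty$ without ever vanishing. I would exclude this by bounding the right-hand side of (\ref{DSeqn}) away from zero. On the region $0<\gamma_1(x)\le\gamma_1(x_0)<\gamma_c(x_0)$ we have, using $P(x)\ge P(x_0)$ (here H2 enters) and $\gamma_1+\gamma_1^2\le \gamma_1(x_0)+\gamma_1(x_0)^2=:Q_0<P(x_0)$,
\begin{equs}
\frac{{\rm d}\gamma_1}{{\rm d}x}=\frac{\gamma_1+\gamma_1^2-P(x)}{sx\gamma_1}
\le \frac{Q_0-P(x_0)}{sx\gamma_1}\le \frac{Q_0-P(x_0)}{sx\,\gamma_1(x_0)}
=-\frac{\delta}{x}
\end{equs}
for the constant $\delta:=(P(x_0)-Q_0)/(s\gamma_1(x_0))>0$. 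Integrating this differential inequality from $x_0$ to $x$ gives $\gamma_1(x)\le \gamma_1(x_0)-\delta\ln(x/x_0)$, which becomes negative (in fact reaches $0$) at some finite $x_1\le x_0\,{\rm e}^{\gamma_1(x_0)/\delta}$. Since $\gamma_1$ is continuous and strictly decreasing on $[x_0,x_1)$, it must satisfy $\gamma_1(x_1)=0$ at some finite $x_1>x_0$, as claimed.

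The main obstacle is making the trapping argument in the first paragraph fully rigorous: one must be careful that $\gamma_1$ genuinely stays below $\gamma_c$ rather than merely touching it — this is where strict monotonicity of $P$ (and hence of $\gamma_c$) is essential, since at a would-be tangential contact point the relations (\ref{eqn:stuff}) give $\gamma_1'=0<\gamma_c'$, so $\gamma_1$ falls back below $\gamma_c$ immediately, contradicting continuity of the first contact. Everything else is a routine Gronwall-type comparison, and the decay estimate only uses that $P$ is bounded below by $P(x_0)>0$ on $[x_0,\infty)$, which is immediate from H2.
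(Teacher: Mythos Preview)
Your proof is correct and follows essentially the same line as the paper's. The paper bypasses your trapping paragraph by observing directly from (\ref{eqn:stuff}) that $\gamma_1(x)\le\gamma_1(x_0)$ for all $x\ge x_0$ (else there would be a local minimum), and then derives the identical differential inequality $\gamma_1'(x)\le -R/x$; your constant $\delta=(P(x_0)-Q_0)/(s\gamma_1(x_0))$ is exactly the paper's $R(x_0,\epsilon)$ after writing $\gamma_1(x_0)=\gamma_c(x_0)-\epsilon$, and the logarithmic integration and explicit upper bound on $x_1$ match.
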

\begin{proof}
Let $\gamma_1(x_0)\equiv\gamma_c(x_0)-\epsilon$ for some
$0<\epsilon<\gamma_c(x_0)$. We first note
that $\gamma_1(x)\leq \gamma_1(x_0)$ for all $x\geq x_0$ such
that the solution exists, otherwise there would be a local
minimum at some $x^{\star}\in[x_0,x]$, which is precluded by
(\ref{eqn:stuff}). Since $P(x)$ is increasing, we find
\begin{eqnarray}\label{Dineq}
\frac{{\rm d}\gamma_1(x)}{{\rm d}x} & \le & \frac{\gamma_c(x_0) -\epsilon +
(\gamma_c(x_0) -\epsilon)^2 - P(x_0)}{s x (\gamma_c(x_0)
-\epsilon)} \nonumber\\
& \le & -\frac{\epsilon(1+2\gamma_c(x_0)
-\epsilon)}{sx(\gamma_c(x_0) -\epsilon)} \equiv -
\frac{R(x_0,\epsilon)}{x},
\end{eqnarray}
for some $R(x_0,\epsilon)>0.$ Integrating (\ref{Dineq}) on
$[x_0,x]$ gives
\begin{equs}
\gamma_1(x) \le \gamma_1(x_0) - R(x_0,\epsilon) \int_{x_0}^x
\frac{{\rm d}z}{z} = \gamma_c(x_0)-\epsilon -
R(x_0,\epsilon)\ln\left(\frac{x}{x_0}\right)~,
\end{equs}
which shows that $\gamma_1(x_1)=0$ for some $x_1\leq x_0
\exp\left(\frac{ \gamma_c(x_0) -\epsilon
}{R(x_0,\epsilon)}\right)<\infty$ as claimed.
\end{proof}
Our next step is to show that solutions that start close enough,
but above the nullcline at $x_0$ cross the nullcline at some
$x>x_0$, and thus cannot be continued as $x\to\infty$ by Lemma
\ref{lemmaD}.

\begin{lemma}\label{lemmaC}
Assume $\gamma_1(x_0) = \gamma_c(x_0) + \delta^2$. There exist
$\delta$ $>0$ sufficiently small such that if $\gamma_1(x)$
solves (\ref{DSeqn}), then $\gamma_1(x_0 +\delta) <
\gamma_c(x_0+\delta)$.
\end{lemma}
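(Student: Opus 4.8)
The plan is to estimate the solution on the short interval $[x_0, x_0+\delta]$ by quantifying two competing effects: the rate at which $\gamma_1(x)$ can climb above the nullcline (which is governed by $f(\gamma_1,x)$ and is small when $\gamma_1-\gamma_c$ is of order $\delta^2$), versus the rate at which the nullcline $\gamma_c(x)$ itself rises (which is of order $\delta$ times $\gamma_c'(x_0)$ over an interval of length $\delta$, hence a genuine order-$\delta$ gain). Since $\delta^2 \ll \delta$ for $\delta$ small, the nullcline will overtake the solution. Concretely, I would first compute, using $\gamma_c' = P'/(2\gamma_c + 1) = P'/\sqrt{1+4P}$ and hypothesis H2, that $\gamma_c(x_0+\delta) - \gamma_c(x_0) = \gamma_c'(x_0)\,\delta + O(\delta^2)$ with $\gamma_c'(x_0) > 0$ strictly (using $P' > 0$, which follows from H2 on a set of positive measure and can be assumed strict as stated at the top of Section \ref{sec:technical}).

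Next I would bound $\gamma_1(x_0+\delta) - \gamma_1(x_0)$ from above. As long as the solution stays above the nullcline on $[x_0, x_0+\delta]$ — which I may assume, since otherwise the conclusion is immediate — we have $\gamma_1(x) + \gamma_1(x)^2 - P(x) \geq 0$, so $\gamma_1$ is increasing there, and also, because $\gamma_c$ is increasing and $\gamma_1$ has no local minimum above the nullcline by \eqref{eqn:stuff}, the solution cannot stray far: on this interval $\gamma_c(x_0) \leq \gamma_1(x) \leq \gamma_1(x_0+\delta)$. I would then get a crude a priori bound $\gamma_1(x) \leq \gamma_c(x_0) + \delta^2 + C_1\delta$ for some constant $C_1$ depending only on $x_0$ and $P$, by applying \eqref{eqn:basic} or by directly integrating the bound $\gamma_1' = f(\gamma_1,x) \leq \gamma_1/(sx_0)$ (valid once $\gamma_1 + \gamma_1^2 - P \leq \gamma_1$, i.e. once $\gamma_1^2 \leq P$, which holds near the nullcline). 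With this uniform bound in hand, $f(\gamma_1(x),x)$ is bounded by a constant $M = M(x_0,P)$ on $[x_0,x_0+\delta]$ for all small $\delta$, giving $\gamma_1(x_0+\delta) - \gamma_1(x_0) \leq M\delta$.

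Finally I would close the argument by comparing the two increments. We have
\begin{equs}
\gamma_1(x_0+\delta) - \gamma_c(x_0+\delta)
= \delta^2 + \bigl(\gamma_1(x_0+\delta)-\gamma_1(x_0)\bigr) - \bigl(\gamma_c(x_0+\delta)-\gamma_c(x_0)\bigr)~,
\end{equs}
and the three terms on the right are $\delta^2$, at most $M\delta$, and exactly $-\gamma_c'(x_0)\delta + O(\delta^2)$ respectively. So the right-hand side equals $-\gamma_c'(x_0)\delta + M\delta + O(\delta^2)$. The subtle point — and the main obstacle — is that the crude bound $M\delta$ is not good enough as written, since $M$ need not be smaller than $\gamma_c'(x_0)$: I must sharpen the estimate of $\gamma_1(x_0+\delta) - \gamma_1(x_0)$ using that the solution starts \emph{on} the nullcline up to $\delta^2$. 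The right way is to write $u(x) = \gamma_1(x) - \gamma_c(x)$, note $u(x_0) = \delta^2$, and estimate $u' = f(\gamma_1,x) - \gamma_c'(x)$; near the nullcline a Taylor expansion of $f$ in its first argument about $\gamma_c(x)$ gives $f(\gamma_c + u, x) = \partial_{\gamma_1}f(\gamma_c,x)\,u + O(u^2)$ (the zeroth-order term vanishes by definition of the nullcline), where $\partial_{\gamma_1}f(\gamma_c,x) = (1 + 2\gamma_c + \gamma_c^2 - \ldots)/(\ldots)$ is a bounded quantity, call its sup on $[x_0,x_0+1]$ $A$. Then $u' \leq A u + O(u^2) - \gamma_c'(x_0) + O(\delta)$, and a Grönwall argument on $[x_0, x_0+\delta]$ gives $u(x_0+\delta) \leq \delta^2 e^{A\delta} - \gamma_c'(x_0)\delta(1 + O(\delta)) \leq \delta^2(1 + O(\delta)) - \gamma_c'(x_0)\delta(1+O(\delta))$, which is strictly negative for $\delta$ small enough since $\gamma_c'(x_0) > 0$. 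This establishes $\gamma_1(x_0+\delta) < \gamma_c(x_0+\delta)$ and completes the proof.
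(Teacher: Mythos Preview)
Your overall strategy is the paper's: bound $\gamma_1(x)$ close to $\gamma_c(x_0)$ on the short interval, compare the increment of $\gamma_1$ (which should be $O(\delta^2)$) to the increment of $\gamma_c$ (which is $\gamma_c'(x_0)\delta+O(\delta^2)$ with $\gamma_c'(x_0)>0$), and conclude. Your proof is correct, but you took an unnecessary detour at the step you flagged as ``the main obstacle''.

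Once you have the a~priori bound $\gamma_1(x)\le\gamma_c(x_0)+C\delta$ on $[x_0,x_0+\delta]$ (from \eqref{eqn:basic}, just as you say), you already have $f(\gamma_1(x),x)\le M\delta$, not merely $f\le M$. Indeed, $f$ is increasing in its first argument (write $f(\gamma_1,x)=\frac{1+\gamma_1}{sx}-\frac{P(x)}{sx\gamma_1}$), and $P(x)\ge P(x_0)$ by H2, so
\begin{equs}
f(\gamma_1(x),x)\le\frac{(\gamma_c(x_0)+C\delta)+(\gamma_c(x_0)+C\delta)^2-P(x_0)}{s x_0(\gamma_c(x_0)+C\delta)}
=\frac{C\delta(1+2\gamma_c(x_0)+C\delta)}{s x_0(\gamma_c(x_0)+C\delta)}\le M\delta~,
\end{equs}
using $\gamma_c(x_0)+\gamma_c(x_0)^2=P(x_0)$. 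Integrating over an interval of length $\delta$ then gives $\gamma_1(x_0+\delta)\le\gamma_c(x_0)+(1+M)\delta^2$, and comparison with the Taylor bound $\gamma_c(x_0+\delta)\ge\gamma_c(x_0)+\gamma_c'(x_0)\delta+N\delta^2$ finishes the proof. This is exactly the paper's argument; it needs neither the case split ``above/below the nullcline'' (the estimate on $f$ holds regardless, and below the nullcline $f<0$ anyway) nor Gr\"onwall.

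Your Gr\"onwall argument on $u=\gamma_1-\gamma_c$ encodes the same observation in disguise: the Taylor expansion $f(\gamma_c+u,x)=\partial_{\gamma_1}f(\gamma_c,x)\,u+O(u^2)$ is precisely the statement that the numerator $\gamma_1+\gamma_1^2-P(x)$ is $O(u)$ near the nullcline. So your fix is valid, but it rediscovers through more machinery the very bound ($f=O(\delta)$) that was already available one line earlier.
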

\begin{proof}
First note that by (\ref{solution}), $\gamma_1(x) \leq
(\frac{x}{x_0})^{1/s}(1+\gamma_1(x_0))-1$.
Under the assumption that $\gamma_1(x_0) = \gamma_c(x_0) +
\delta^2$, we thus find
\begin{equs}
\sup_{x\in [x_0,x_0+\delta]} \gamma_1(x) \le
\gamma_c(x_0)+C\delta
\end{equs}
for some constant $C=C(x_0,s)>0$. We now use the following estimate on
$f(\gamma_1(x),x)$
\begin{equs}
\sup_{x\in [x_0,x_0+\delta]} f(\gamma_1(x),x) &\leq
\frac{\gamma_c(x_0)
+C\delta+ (\gamma_c(x_0)
+C\delta)^2 - P(x_0)}{s x_0 (\gamma_c(x_0)
+C\delta)} \\
& \leq \frac{C\delta}{s x_0}\left(2
+\frac{1}{\gamma_c(x_0)+C\delta} \right) \le
M\delta
\end{equs}
for some constant $M=M(x_0,s)>0$. We thus find, upon
integration of (\ref{DSeqn}) that
\begin{equs}\label{lowerC}
\gamma_1(x_0+\delta) \le \gamma_c(x_0)
+(1+M)\delta^2.
\end{equs}
Now by Taylor's theorem, there exists a constant $N(x_0)$ such
that
\begin{equs}\label{upperC}
\gamma_c(x_0+\delta) \ge \gamma_c(x_0) +\gamma_c'(x_0)\delta+N(x_0)\delta^2~.
\end{equs}
Since $\gamma_c'(x_0)>0$, we can
choose $\delta$ sufficiently small so that
\begin{equs}
(1+M-N(x_0))\delta^2
< \gamma_c'(x_0)\delta~,
\end{equs}
which implies that $\gamma_1(x_0+\delta)<\gamma_c(x_0+\delta)$
and completes the proof.
\end{proof}

Our next step is to show that every local solution of
(\ref{DSeqn}) can be continued as $x\to0$. We will also show that
all solutions behave asymptotically like $\gamma_c(x)$ as
$x\to0$.

\begin{lemma}\label{lemma:continuation0}
Let $\gamma_1(x)$ be a (local) solution of (\ref{DSeqn}) with
$\gamma_1(x_0)>0$. Then that solution can be continued for all
$x\in[0,x_0]$. Furthermore, there exist $0<x_1\leq x_0$ 
(with $x_1=x_0\Leftrightarrow\gamma_1(x_0)\geq\gamma_c(x_0)$)
and a constant $C>0$ such that $\gamma_c(x)<\gamma_1(x) \le \gamma_c(x)+
Cx^{\frac{1}{s}}+C B_s(x,x_1)$ for all $x\in[0,x_1]$, where
$B_s(x,x_1)$ is defined in (\ref{eqn:def_fs}).
\end{lemma}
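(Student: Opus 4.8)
The plan is to combine the phase‑plane information already recorded around \eqref{eqn:stuff} — a solution of \eqref{DSeqn} has no local minima, and an extremum can occur only on the nullcline $\gamma_c$, which is itself strictly increasing — with a single iteration of the integral form \eqref{solution}.

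First I would locate the point $x_1$ below which the solution lies above $\gamma_c$. If $\gamma_1(x_0)\ge\gamma_c(x_0)$, set $x_1=x_0$. If $\gamma_1(x_0)<\gamma_c(x_0)$, then while it stays below $\gamma_c$ the solution has $\gamma_1'<0$, so it \emph{increases} as $x$ decreases; since $\gamma_c$ decreases to $0$ as $x\to 0$ (because $P(0)=0$) while $\gamma_1\ge\gamma_1(x_0)>0$, the two graphs must meet at some finite $x_1\in(0,x_0)$, and along the way $\gamma_1$ is trapped in $(\gamma_1(x_0),\gamma_c(x_0))$, bounded away from $0$ and $\infty$, so the local solution genuinely reaches $x_1$. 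This yields the stated dichotomy $x_1=x_0\Leftrightarrow\gamma_1(x_0)\ge\gamma_c(x_0)$. Next I would show $\gamma_1>\gamma_c$ throughout $(0,x_1)$ by excluding a last crossing point $x_\ast$: there $\gamma_1'(x_\ast)=0$ by \eqref{DSeqn} while $\gamma_c'(x_\ast)=P'(x_\ast)/\sqrt{1+4P(x_\ast)}>0$ by strict monotonicity of $P$, so $\gamma_1-\gamma_c<0$ immediately to the right of $x_\ast$, contradicting maximality. Hence $\gamma_1>\gamma_c>0$ and $\gamma_1'>0$ on $(0,x_1]$, so $\gamma_1$ is monotone there.

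Then I would run the integral estimate. On $(0,x_1]$ the integrating factor $x^{-1/s}$ puts \eqref{DSeqn} into the form \eqref{solution} with base point $x_1$, $\gamma_1(x)=(x/x_1)^{1/s}(1+\gamma_1(x_1))-1+x^{1/s}\int_x^{x_1}\frac{P(z)}{sz^{1+1/s}\gamma_1(z)}\,{\rm d}z$. Using $\gamma_1(z)>\gamma_c(z)$ and the nullcline identity $P(z)=\gamma_c(z)+\gamma_c(z)^2$ (so $P(z)/\gamma_c(z)=1+\gamma_c(z)$) bounds $P(z)/\gamma_1(z)$ by $1+\gamma_c(z)$; the ``$1$'' integrates to $1-(x/x_1)^{1/s}$ and cancels the $-1$ up to an ${\cal O}(x^{1/s})$ term, leaving $x^{1/s}\int_x^{x_1}\frac{\gamma_c(z)}{sz^{1+1/s}}\,{\rm d}z$. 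Writing $\gamma_c(z)=\int_0^z\gamma_c'(w)\,{\rm d}w$ and exchanging the order of integration turns this into $\gamma_c(x)$ minus a non‑negative term plus $x^{1/s}\int_x^{x_1}\gamma_c'(w)w^{-1/s}\,{\rm d}w$; since $\gamma_c'=P'/\sqrt{1+4P}$ is bounded on the compact interval $[0,x_1]$ (this is where $P\in{\cal C}^2$ is used), that last piece is at most $C\,x^{1/s}\int_x^{x_1}w^{-1/s}\,{\rm d}w=C\,B_s(x,x_1)$. Collecting, $\gamma_1(x)\le\gamma_c(x)+Cx^{1/s}+C\,B_s(x,x_1)$; with $\gamma_1>\gamma_c$ and the vanishing of all three terms as $x\to0$ this forces $\gamma_1(x)\to0$. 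It also closes the continuation: on $(0,x_1]$ the solution is squeezed between $\gamma_c(x)>0$ and $\gamma_1(x_1)$, hence stays in a compact subset of the regularity domain of $f$ near each interior point and cannot cease to exist, so it extends to $(0,x_1]$ and then continuously to $[0,x_0]$; the segment $[x_1,x_0]$, when $x_1<x_0$, was already handled above. Finally \eqref{eqn:def_fs} is the elementary evaluation of $B_s(x,x_0)=x^{1/s}\int_x^{x_0}z^{-1/s}\,{\rm d}z$ in the three cases $s<1$, $s=1$, $s>1$.

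The main obstacle is the sharpening in the previous paragraph. The naive use of $\gamma_1>\gamma_c$ in \eqref{solution} gives only $\limsup_{x\to0}\gamma_1(x)\le\gamma_c(x_1)$, which is too weak; one must peel the nullcline term $\gamma_c(x)$ out of the nonlinear integral with an error no worse than ${\cal O}(x^{1/s}+B_s(x,x_1))$ in order to conclude $\gamma_1(x)\to0$ (so that the solution matches perturbation theory at the origin) and to pin down the remainder precisely — the Fubini / integration‑by‑parts manoeuvre is the crux. The only other subtlety, already disposed of above, is preventing the solution from oscillating across the nullcline infinitely often as $x\to0$, which is exactly where the sign comparison $\gamma_1'(x_\ast)=0<\gamma_c'(x_\ast)$, and hence the strict monotonicity of $P$, enters.
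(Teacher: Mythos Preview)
Your proof is correct and follows essentially the same route as the paper's: locate $x_1$ via the nullcline, use the no-minimum/crossing structure from \eqref{eqn:stuff} to get $\gamma_1>\gamma_c$ on $(0,x_1)$, then feed $P/\gamma_1\le 1+\gamma_c$ into \eqref{solution} and peel off $\gamma_c(x)$ with a remainder bounded by $Cx^{1/s}+CB_s(x,x_1)$. The only cosmetic differences are that the paper phrases the $\gamma_1>\gamma_c$ step via the absence of local minima rather than your last-crossing derivative comparison, and obtains \eqref{IBP} by a direct integration by parts rather than the Fubini swap you describe --- the computations are identical.
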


\begin{proof}
We first note that $\gamma_1(x_0)>0$ guarantees that the solution
exists locally around $x_0$. We now prove that it satisfies
\begin{equs}
\min(\gamma_c(x),\gamma_1(x_0))
\leq\gamma_1(x)\leq
\max(\gamma_c(x_0),\gamma_1(x_0))~~~\forall x\in[0,x_0]~,
\label{eqn:frofro}
\end{equs}
and hence can be continued up to $x=0$.
Recall from (\ref{eqn:stuff}) that a solution can have at most
one global maximum, and no local minimum. We now consider two
cases, $\gamma_1(x_0)\geq\gamma_c(x_0)$ and
$\gamma_1(x_0)<\gamma_c(x_0)$.

In the first case, we claim that
\begin{equs}
\gamma_1(x_0)\geq\gamma_c(x_0)~~~\Rightarrow~~~
\gamma_c(x)\leq\gamma_1(x)\leq\gamma_1(x_0)~~~
\forall x\in[0,x_0]~.
\label{eqn:frifri}
\end{equs}
Namely, if $\gamma_1(x_0)\geq\gamma_c(x_0)$, then $\gamma_1(x)$
must decrease as $x$ decreases, at least for all $x$ sufficiently
close to $x_0$ with $x<x_0$. This follows because either
$\gamma_1'(x_0)>0$ if $\gamma_1(x_0)>\gamma_c(x_0)$ or
$\gamma_1'(x_0)=0$ and $\gamma_1''(x_0)<0$ if
$\gamma_1(x_0)=\gamma_c(x_0)$). We thus get that
$\gamma_1(x)<\gamma_1(x_0)$ holds for all $0\leq x<x_0$ since the
solution cannot have a local minimum. Also, the solution cannot
have a maximum at $x_1<x_0$ either, since at that maximum,
$\gamma_1(x_1)<\gamma_1(x_0)$, which would require a
local minimum at some intermediate value $x^{\star}\in(x_1,x_0)$,
and hence $\gamma_1(x)>\gamma_c(x)$ for all $x<x_0$.

In the case $\gamma_1(x_0)<\gamma_c(x_0)$, we claim that there
exist $0<x_1<x_0$ such that $\gamma_1(x_1)=\gamma_c(x_1)$, or, in
other words, the solution crosses the nullcline at some
$x_1<x_0$. Namely, the solution must increase initially as $x$
decreases (since $\gamma_1'(x_0)<0$). Since the solution cannot
have a local minimum, $\gamma_1(x)\geq\gamma_1(x_0)$ as long as
it is below the nullcline, and hence it must cross the nullcline
(and have a global maximum) at some
$x_1\in(\gamma_c^{-1}(\gamma_1(x_0)),x_0)$. In
particular, the global maximum $\gamma_1(x_1)$ satisfies
$\gamma_1(x_1)\leq\gamma_c(x_0)$ since $\gamma_c$ is strictly
increasing. We thus find
\begin{equs}
\gamma_1(x_0)\leq\gamma_1(x)\leq\gamma_c(x_0)~~~\forall x\in[x_1,x_0]~.
\end{equs}
Since $\gamma_1(x_1)=\gamma_c(x_1)$, we apply (\ref{eqn:frifri})
with $x\in[0,x_1]$ and get (using also
$\gamma_c(x_1)\leq\gamma_c(x_0)$) that
\begin{equs}
\gamma_c(x)\leq\gamma_1(x)\leq\gamma_1(x_1)\leq\gamma_c(x_0)~~~
\forall x\in[0,x_1]~.
\end{equs}
This completes the proof of (\ref{eqn:frofro}).

Note now that in all cases, there exists $x_1\leq x_0$ such that
$\gamma_1(x)\geq\gamma_c(x)$ for all $x\in[0,x_1]$. In
particular, $\frac{P(z)}{\gamma_1(z)}\leq
\frac{P(z)}{\gamma_c(z)}=1+\gamma_c(z)$ for all $z\in[0,x_1]$.
Let now $x\in[0,x_1]$. From (\ref{solution}), we find
\begin{equs}
\label{lemmaBest1}
\gamma_1(x) \le
\left(\frac{x}{x_1}\right)^{1/s}(1+\gamma_1(x_1))-1 +
x_1^{1/s}\int_{x}^{x_1} \frac{1+\gamma_c(z)}{sz^{1+1/s}}{\rm d}z~,
\end{equs}
which, after integrating by parts, gives
\begin{equs}\label{IBP}
\gamma_1(x) \le
\left(\frac{x}{x_1}\right)^{1/s}(\gamma_1(x_1)-\gamma_c(x_1)) +
\gamma_c(x) + x^{1/s}\int_{x}^{x_1}
\frac{\gamma_c'(z)}{z^{1/s}}{\rm d}z~.
\end{equs}
Since $\gamma_c'(z)\leq C$ for all $z\in[0,x_1]$, we get
$\gamma_1(x)\leq\gamma_c(x)+Cx^{\frac{1}{s}}+C B_s(x,x_1)$
for all $x\in[0,x_1]$, which completes the proof.
\end{proof}

We now have all the tools to prove Theorem
\ref{thm:separatrixintro}, which we restate now in the form
\begin{theorem}
\label{thm:separatrix}
Assume that (\ref{condition}) holds. The set
\begin{equs}
{\rm I}(x_0) = \{\gamma_1(x_0) > \gamma_c(x_0) ~|~\exists x>
x_0~~\mbox{ with }~~~
\gamma_1(x) < \gamma_c(x) \}~,
\end{equs}
is a single, open, non-empty and bounded interval.
Moreover the solution $\gamma_1^{\star}(x)$ of (\ref{DSeqn}) with
\begin{equs}
\gamma_1^{\star}(x_0) = \sup({\rm I}(x_0))
\end{equs}
is the smallest solution that exists for all $x\in[0,\infty)$, and its graph
defines the separatrix, in the sense that any global solution
$\gamma_1(x)$ of (\ref{DSeqn}) satisfies
\begin{equs}
\gamma_c(x)<\gamma_{1}^{\star}(x)\leq
\gamma_1(x)\leq
\gamma_c(x)+
Cx^{1/s}+
C
\my{\{}{12}
\begin{array}{ll}
B_s(x) & \mbox{if }~~x\leq x_0 \\[1mm]
0 & \mbox{if }~~x>x_0
\end{array}
\label{eqn:finalbound}
\end{equs}
for all $x\in[0,\infty)$.
\end{theorem}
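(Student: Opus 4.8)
The plan is to assemble Theorem \ref{thm:separatrix} from the lemmas of this section together with the continuity/monotonicity facts recorded in the proof sketch of Theorem \ref{thm:separatrixintro}. First I would establish the three structural properties of the set ${\rm I}(x_0)$. \emph{Non-empty}: since \reff{condition} holds, Theorem \ref{thm:dichotomy} produces arbitrarily large initial data $\gamma_1(x_0)$ whose solutions are global, hence never dip below the nullcline — but I actually need the \emph{opposite}; what makes ${\rm I}(x_0)$ non-empty is Lemma \ref{lemmaC}, which shows that data of the form $\gamma_c(x_0)+\delta^2$ with $\delta$ small crosses the nullcline at $x_0+\delta$. \emph{Bounded}: by Theorem \ref{thm:dichotomy} there exists $\Gamma$ so large that the solution starting at $\gamma_1(x_0)=\Gamma$ is global; since solutions starting above the nullcline cannot have a local minimum \reff{eqn:stuff}, a global solution stays above $\gamma_c(x)$ for all $x\ge x_0$, so $\Gamma\notin{\rm I}(x_0)$ and ${\rm I}(x_0)\subseteq(\gamma_c(x_0),\Gamma)$. \emph{Open and an interval}: openness follows from continuity of solutions in the initial condition (the defining condition ``$\exists x>x_0$ with $\gamma_1(x)<\gamma_c(x)$'' is an open condition, as crossing the nullcline transversally persists under perturbation); that it is an interval follows from \reff{eqn:diff}, which shows that lowering $\gamma_1(x_0)$ lowers $\gamma_1(x)$ pointwise, so if $\gamma_1(x_0)\in{\rm I}(x_0)$ then every smaller admissible value (i.e.\ still above $\gamma_c(x_0)$) is in ${\rm I}(x_0)$ as well.

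Next I would identify $\gamma_1^{\star}(x_0)=\sup{\rm I}(x_0)$ and show the corresponding solution is global and minimal among global solutions. Since ${\rm I}(x_0)$ is open, $\gamma_1^{\star}(x_0)\notin{\rm I}(x_0)$, so the solution $\gamma_1^{\star}(x)$ never crosses the nullcline for $x\ge x_0$; combined with the no-local-minimum fact \reff{eqn:stuff} and the a priori bound \reff{eqn:basic}, one rules out blow-up (the upper bound \reff{eqn:basic} is linear in $x^{1/s}$) and rules out reaching $\gamma_1=0$ (that would force a crossing of the nullcline first, since $\gamma_c>0$), so $\gamma_1^{\star}$ is global. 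Minimality: any $\gamma_1(x_0)<\gamma_1^{\star}(x_0)$ lies in ${\rm I}(x_0)$ (using that ${\rm I}(x_0)$ is an interval with supremum $\gamma_1^{\star}(x_0)$), hence its solution crosses the nullcline and then, by Lemma \ref{lemmaD}, hits zero at a finite $x_1$ and cannot be continued — so it is not global. Conversely every global solution has $\gamma_1(x_0)\ge\gamma_1^{\star}(x_0)$, and then \reff{eqn:diff} gives $\gamma_1(x)\ge\gamma_1^{\star}(x)$ for all $x\ge x_0$; for $x\le x_0$ the inequality $\gamma_1^{\star}(x)\le\gamma_1(x)$ needs the extension-to-zero analysis of Lemma \ref{lemma:continuation0}, where again monotonicity via \reff{eqn:diff} (valid as long as both solutions stay positive, which Lemma \ref{lemma:continuation0} guarantees on $[0,x_0]$) propagates the ordering down to $x=0$.

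Finally I would derive the two-sided bound \reff{eqn:finalbound}. The lower bound $\gamma_c(x)<\gamma_1^{\star}(x)$ for all $x\ge0$ is exactly what the preceding paragraph established (the solution stays strictly above the nullcline for $x\ge x_0$, and Lemma \ref{lemma:continuation0} gives $\gamma_c(x)<\gamma_1^{\star}(x)$ on $[0,x_0]$). For the upper bound: on $x\ge x_0$, insert \reff{eqn:basic} into the integral equation \reff{solution} exactly as in the proof of Theorem \ref{thm:dichotomy}/Corollary \ref{cor:growth} to get $\gamma_1(x)\le\gamma_c(x)+Cx^{1/s}$ — here one uses $P(x)\ge0$ to discard the (negative) integral term and bounds $\gamma_c(x)$ by a multiple of $\sqrt{P(x)}\le$ a multiple of $x^{1/s}$ when $P$ grows, or absorbs it when $P$ is bounded. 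On $x\le x_0$, the bound $\gamma_1(x)\le\gamma_c(x)+Cx^{1/s}+CB_s(x,x_0)$ is precisely the content of Lemma \ref{lemma:continuation0} (with $x_1$ replaced by $x_0$ after possibly enlarging $C$, using $\gamma_c(x_1)\le\gamma_c(x_0)$ and $B_s(x,x_1)\le B_s(x,x_0)$). I expect the main obstacle to be the bookkeeping in gluing the $x\le x_0$ and $x\ge x_0$ estimates into the single displayed inequality \reff{eqn:finalbound} — in particular making sure the constant $C$ can be chosen uniformly and that the case split in the definition of $B_s$ \reff{eqn:def_fs} is correctly absorbed — rather than any genuinely new analytic difficulty, since all the dynamical input (no local minima, at most one maximum on the nullcline, the integrability-driven dichotomy, and continuous dependence) is already in hand from Lemmas \ref{lemmaD}, \ref{lemmaC}, \ref{lemma:continuation0} and Theorem \ref{thm:dichotomy}.
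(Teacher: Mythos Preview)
Your proposal is correct and follows the paper's argument essentially step for step: Lemma~\ref{lemmaC} for non-emptiness, Theorem~\ref{thm:dichotomy} for boundedness, continuity for openness, \reff{eqn:diff} for the interval property and the ordering, Lemma~\ref{lemmaD} for non-globality below the separatrix, and Lemma~\ref{lemma:continuation0} plus \reff{eqn:basic} for the two-sided bound. Two small points of polish: for the interval property, \reff{eqn:diff} is only valid while both solutions remain positive, so you should note (as the paper does) that if the lower solution dies before the upper one reaches its crossing point it has itself already crossed $\gamma_c$ and hence lies in ${\rm I}(x_0)$; and for the upper bound on $x\ge x_0$, \reff{eqn:basic} already gives $\gamma_1(x)\le Cx^{1/s}\le \gamma_c(x)+Cx^{1/s}$ directly since $\gamma_c\ge 0$, so your detour through bounding $\gamma_c$ by a multiple of $x^{1/s}$ is unnecessary.
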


\begin{proof}
We first note that by Lemma \ref{lemmaC}, ${\rm
I}(x_0)\neq\emptyset$, and by Lemma \ref{lemmaD} solutions that
start in ${\rm I}(x_0)$ cannot be continued as $x\to\infty$.
Since global solutions exist by Theorem \ref{thm:dichotomy} for
all $\gamma_1(x_0)$ large enough, ${\rm I}(x_0)$ is bounded
above. Also, ${\rm I}(x_0)$ is open by continuity of solutions
with respect to initial conditions. Consider now $\gamma_1(x)$
and $\gamma_2(x)$, to be two solutions of (\ref{DSeqn}), for
which $\gamma_c(x_0)<\gamma_2(x_0)<\gamma_1(x_0)$ and
$\gamma_1(x_0)\in{\rm I}(x_0)$. We now claim that $\gamma_2(x_0)$
must also be in ${\rm I}(x_0)$. Namely, since
$\gamma_1(x_0)\in{\rm I}(x_0)$, there must be an $x_1>x_0$ such
that $\gamma_1(x_1)<\gamma_c(x_1)$. By (\ref{eqn:diff}), we have
$\gamma_2(x)<\gamma_1(x)$ as long as both solutions exist. In
particular, either $\gamma_2(x)$ exists on $[x_0,x_1]$, and
(\ref{eqn:diff}) shows that
$\gamma_2(x_1)<\gamma_1(x_1)<\gamma_c(x_1)$ and thus
$\gamma_2(x_0)\in{\rm I}(x_0)$, or $\gamma_2(x)$ cannot be
continued up to $x=x_1$ and since it cannot diverge to infinity by
(\ref{eqn:basic}), we must have $\gamma_2(x_2)=0$ for some
$x_2<x_1$, which also implies that $\gamma_2(x_0)\in{\rm
I}(x_0)$. This shows that ${\rm I}(x_0)$ is a single open
interval. We thus define
\begin{equs}
\gamma_1^{\star}(x_0) = \sup({\rm I}(x_0))~.
\end{equs}
Evidently, $\gamma_1^{\star}(x_0)\notin{\rm I}(x_0)$, and so the
corresponding solution $\gamma_1^{\star}(x)$ of (\ref{DSeqn}) satisfies
$\gamma_c(x)<\gamma_1^{\star}(x)\leq Cx^{1/s}$ for all $x \ge x_0$
(the lower bound follows by definition of ${\rm I}(x_0)$, the
upper bound by (\ref{eqn:basic})), while Lemma
\ref{lemma:continuation0} shows that $\gamma_1^{\star}(x)$ exists
for all $x\in[0,x_0]$ and satisfies
$\gamma_c(x)<\gamma_1^{\star}(x)\le\gamma_c(x)+
Cx^{1/s}+C B_s(x,x_0)$ for all $x\in[0,x_0]$. Using
(\ref{eqn:diff}) and (\ref{eqn:basic}) again shows that the
solution corresponding to any
$\gamma_1(x_0)\geq\gamma_1^{\star}(x_0)$ can also be continued as
$x\to\infty$ and satisfies (\ref{eqn:finalbound}).
\end{proof}

We conclude this section with a last result concerning the growth
of the separatrix in the case where $P$ is a slowly increasing
function.

\begin{lemma}
Assume that $P$ satisfies
\begin{equs}
\int_{x_0}^{\infty}
\frac{{\rm d}z}{z~\gamma_c(z)}=
\int_{x_0}^{\infty}
\frac{2{\rm d}z}{z\sqrt{1+4P(z)}-1}=\infty~,
\label{eqn:repopolre}
\end{equs}
for some $x_0>0$. Then there exists a constant $C>0$ such that
the separatrix $\gamma_1^{\star}$ satisfies
$\gamma_c(x)<\gamma_1^{\star}(x)\leq \gamma_c(x)+C\ln(x)$ as
$x\to\infty$. In particular,
\begin{equs}
\int_{x_0}^{\infty}
\frac{{\rm d}z}{z~\gamma_1^{\star}(z)}=\infty~.
\end{equs}
\end{lemma}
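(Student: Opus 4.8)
\emph{Plan.} I would (i) recall the lower bound, (ii) prove the upper bound by exhibiting, for each $x>0$, an explicit \emph{global} solution lying a controlled amount above the nullcline at $x$, then reduce matters to a growth estimate on $\gamma_c$, and (iii) deduce the divergence of $\int\mathrm dz/(z\gamma_1^\star)$ from the upper bound.

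The lower bound $\gamma_c(x)<\gamma_1^\star(x)$ is already contained in Theorem~\ref{thm:separatrixintro}. For the upper bound, fix $x_1>0$ and write $u=\gamma_1-\gamma_c$ for a solution of \eqref{DSeqn}. Since $\gamma_c+\gamma_c^2=P$ one has $\gamma_1+\gamma_1^2-P=u(1+2\gamma_c+u)$, hence $\gamma_1'=\dfrac{u(1+2\gamma_c+u)}{sx(\gamma_c+u)}\ge\dfrac{u}{sx}$ while $u>0$ (because $1+2\gamma_c+u\ge\gamma_c+u$), so $u'\ge u/(sx)-\gamma_c'$. Multiplying by the integrating factor $x^{-1/s}$ and integrating from $x_1$ gives, as long as $u>0$ on $[x_1,x]$,
\[
 u(x)\ \ge\ \Big(\tfrac{x}{x_1}\Big)^{1/s}u(x_1)\ -\ x^{1/s}\!\int_{x_1}^{x}z^{-1/s}\gamma_c'(z)\,\mathrm dz .
\]
Choosing $u(x_1)=\delta+x_1^{1/s}\!\int_{x_1}^{\infty}z^{-1/s}\gamma_c'(z)\,\mathrm dz$ for arbitrary $\delta>0$ (the integral being finite under the standing hypotheses — a point to check, using that $z^{-1/s}\gamma_c(z)\to0$, which follows from \eqref{condition} and $\gamma_c\le\sqrt P$), the right-hand side is $\ge(x/x_1)^{1/s}\delta>0$ for all $x\ge x_1$, so $u$ never vanishes there; by \eqref{eqn:basic} this solution cannot blow up, and by Lemma~\ref{lemma:continuation0} it extends to $x=0$, hence it is global. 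Since $\gamma_1^\star$ is the minimal global solution, and so lies below every global solution at every point by the non-crossing identity \eqref{eqn:diff}, letting $\delta\downarrow0$ yields, for all $x>0$,
\[
 \gamma_1^\star(x)\ \le\ \gamma_c(x)+x^{1/s}\!\int_{x}^{\infty}z^{-1/s}\gamma_c'(z)\,\mathrm dz
 \ =\ \frac{x^{1/s}}{s}\int_{x}^{\infty}\frac{\gamma_c(z)}{z^{1+1/s}}\,\mathrm dz
 \ =\ \frac1s\int_{1}^{\infty}\frac{\gamma_c(xt)}{t^{1+1/s}}\,\mathrm dt ,
\]
the middle equality by integration by parts (legitimate since $z^{-1/s}\gamma_c(z)\to0$) and the last by the substitution $z=xt$. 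As $\int_1^\infty t^{-1-1/s}\mathrm dt=s$ and $\gamma_c$ is increasing, this re-proves $\gamma_1^\star>\gamma_c$ and, crucially, gives $\gamma_1^\star(x)-\gamma_c(x)\le\tfrac1s\int_1^\infty\bigl(\gamma_c(xt)-\gamma_c(x)\bigr)t^{-1-1/s}\,\mathrm dt$. (Alternatively, the a priori inequality $\gamma_1^\star(x)\le x^{1/s}\sqrt{\tfrac2s\int_x^\infty P(z)z^{-1-2/s}\mathrm dz}$, obtained by running the proof of Theorem~\ref{thm:dichotomy} from the base point $x$ and letting $\epsilon\to0$, gives a comparable starting bound.)

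It therefore remains to bound log-increments of $\gamma_c$: if $\gamma_c(xt)-\gamma_c(x)\le C(\ln x)(1+\ln t)$ for $t\ge1$ and $x$ large, the $t$-integral converges and yields $\gamma_1^\star(x)-\gamma_c(x)\le C'\ln x$, which together with the lower bound is the asserted two-sided estimate. This is where the hypothesis $\mathcal L(P)=\int^\infty\mathrm dz/(z\gamma_c(z))=\infty$ enters, in conjunction with the monotonicity of $\gamma_c$ and the a priori bound $\gamma_c(x)<\gamma_1^\star(x)\le Cx^{1/s}$ from \eqref{eqn:basic}: these confine $\gamma_c$ to grow slowly enough (morally $\gamma_c(z)=O((\ln z)^2)$ with $\gamma_c(ze^r)-\gamma_c(z)=O(r\ln z)$) that the displayed integral is $O(\ln x)$. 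I expect this quantitative passage — converting the \emph{divergence} of $\int\mathrm dz/(z\gamma_c)$ into an \emph{effective, pointwise} bound on increments of $\gamma_c$ — to be the main obstacle; concretely one would partition $[x,\infty)$ into blocks on which $\gamma_c$ at most doubles and play the divergence of $\int\mathrm dz/(z\gamma_c)$ against the geometric decay of $z^{-1/s}$ across such blocks.

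For the final assertion, suppose $\gamma_1^\star(z)\le\gamma_c(z)+C\ln z$ for large $z$. Then $\gamma_c(z)+C\ln z\le(1+C)\max(\gamma_c(z),\ln z)$, so
\[
 \int^{\infty}\frac{\mathrm dz}{z\,\gamma_1^\star(z)}\ \ge\ \frac1{1+C}\int^{\infty}\frac{\mathrm dz}{z\,\max(\gamma_c(z),\ln z)} .
\]
The substitution $t=\ln z$ turns the right-hand integral into $\tfrac1{1+C}\int^{\infty}\mathrm dt/H(t)$ with $H(t)=\max(\gamma_c(e^t),t)$ nondecreasing. If this integral were finite, then $\int_{t}^{2t}\mathrm ds/H(s)\ge t/H(2t)\to0$ would force $H(t)/t\to\infty$, hence $H(t)=\gamma_c(e^t)$ for all large $t$, whence $\int^{\infty}\mathrm dt/\gamma_c(e^t)=\int^{\infty}\mathrm dz/(z\gamma_c(z))=\mathcal L(P)<\infty$, contradicting the hypothesis. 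Therefore $\int^{\infty}\mathrm dz/(z\gamma_1^\star(z))=\infty$, as claimed.
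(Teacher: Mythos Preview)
Your architecture matches the paper's (lower bound from Theorem~\ref{thm:separatrixintro}; upper bound by producing global solutions that cap the separatrix; then deduce divergence), but the mechanism for the upper bound is genuinely different. The paper argues directly that the curve $\overline{\gamma_1}(x)=\gamma_c(x)+\overline{C}\ln x$ is a barrier: it claims that for large $x_0$ one has $\overline{\gamma_1}'(x_0)<f(\overline{\gamma_1}(x_0),x_0)$, so a solution launched from $\overline{\gamma_1}(x_0)$ never recrosses the curve and is therefore global, forcing $\gamma_1^\star\le\overline{\gamma_1}$. You instead linearize via $u=\gamma_1-\gamma_c$, obtain $u'\ge u/(sx)-\gamma_c'$, and integrate to an explicit formula $\gamma_1^\star(x)\le \gamma_c(x)+x^{1/s}\!\int_x^\infty z^{-1/s}\gamma_c'(z)\,dz$; this is cleaner and gives a sharper, non-asymptotic bound when it works. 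Your divergence argument in part~(iii) is also more careful than the paper's final line, whose inequality $\int\min(f,g)\ge\tfrac12\min(\int f,\int g)$ is not valid in general.

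The real issue is the step you flag as ``the main obstacle'', and here the gap is genuine --- and, notably, it is exactly where the paper's own proof is weakest. To run its barrier argument the paper asserts that $\mathcal L(P)=\infty$ forces the pointwise bounds $P(x)<C_1(\ln x)^4$ and $xP'(x)/\sqrt{1+4P(x)}\le C_2\ln x$, calling the first ``obvious''. These do \emph{not} follow from $\mathcal L(P)=\infty$ and monotonicity alone: writing $g(t)=\gamma_c(e^t)$, one can take a (smoothed) increasing step function with $g\equiv n\,t_n^2$ on $[t_n,t_{n+1})$ and $t_{n+1}=t_n+n\,t_n^2$; then $\int dt/g=\sum_n1=\infty$, yet $g(t_n)/t_n^2=n\to\infty$, so no bound $P(x)\le C(\ln x)^4$ can hold, and near the smoothed jumps $\gamma_c'$ is as large as one likes. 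The same construction defeats your desired increment bound $\gamma_c(xt)-\gamma_c(x)\le C(\ln x)(1+\ln t)$, and in fact shows that just below a jump the separatrix must already sit near the post-jump value of $\gamma_c$ (since $1+\gamma_1$ can grow by at most the factor $(x/x_0)^{1/s}$ across any interval), so $\gamma_1^\star(x)-\gamma_c(x)$ is of order the jump size, which is $\gg\ln x$. In short: your identification of the obstacle is correct, and without an extra regularity hypothesis on $P$ (bounding $x\gamma_c'(x)$, say) neither your route nor the paper's barrier argument closes; the first conclusion of the lemma appears to require more than the stated hypotheses.
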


\begin{proof}
We first derive from (\ref{eqn:repopolre}) the following bounds
on the asymptotics of $P(x)$ and $P'(x)$ as $x\to\infty$:
\begin{equs}
P(x)<C_1\ln(x)^4~~~\mbox{and}~~~
\frac{P'(x)x}{\sqrt{1+4P(x)}}\leq C_2\ln(x)~.
\end{equs}
The first one is obvious, and assuming
$\frac{P'(x)x}{\sqrt{1+4P(x)}}>C_2\ln(x)$ gives $P(x)\geq
C_1(1+\ln(x))^4$ which contradicts (\ref{eqn:repopolre}). In
particular, (\ref{condition}) is satisfied for all $s>0$, and we
are guaranteed by Theorem \ref{thm:separatrix} that the
separatrix $\gamma_1^{\star}$ exists and satisfies
$\gamma_1^{\star}(x)>\gamma_c(x)$ for all $x>0$.

Let now $\overline{\gamma_{1}}(x)=\gamma_c(x)+\overline{C}\ln(x)$
for some $\overline{C}>0$, and consider the one parameter family
of solutions of (\ref{DSeqn}) obtained by fixing
$\gamma_1(x_0)=\overline{\gamma_{1}}(x_0)$ for different $x_0>1$.
At least, those solutions that start with $x_0$ sufficiently large
can be extended as $x\to\infty$, since 
\begin{equs}
\frac{{\rm d}}{{\rm d}x}\overline{\gamma_{1}}(x_0)
<
\frac{{\rm d}}{{\rm d}x}\gamma_{1}(x_0)
=\frac{\overline{\gamma_{1}}(x_0)+\overline{\gamma_{1}}(x_0)^2-P(
x_0)}{
sx_0\overline{\gamma_{1}}(x_0)}~,
\end{equs}
because 
\begin{equs}
\frac{{\rm d}}{{\rm d}x}\overline{\gamma_{1}}(x_0)&=
\frac{\overline{C}+C_2\ln(x_0)}{x_0}~~~\mbox{as}~~~x_0\to\infty\\
\frac{\overline{\gamma_{1}}(x_0)+\overline{\gamma_{1}}(x_0)^2-P(
x_0)}{
sx_0\overline{\gamma_{1}}(x_0)}
&=\frac{2\overline{C}\ln(x_0)}{s~x_0}
+{\cal O}(x_0^{-1})~.
\end{equs}
This shows that a solution that starts on the curve
$\overline{\gamma_{1}}(x)=\gamma_c(x)+\overline{C}\ln(x)$ at
$x=x_0$ cannot cross it at any $x$ with $x>x_0$, and thus is
a global solution. In particular, the separatrix must be smaller
than any of these solutions, and we get
$\gamma_1^{\star}(x)\leq\gamma_c(x)+\overline{C}\ln(x)$. From
this last estimate, we get immediately that
\begin{equs}
\int_{x_0}^{\infty}
\frac{{\rm d}z}{z\gamma_1^{\star}(z)}\geq
\frac{1}{2}\min\myl{12}
\int_{x_0}^{\infty}
\frac{{\rm d}z}{z\gamma_c(z)}
~,~
\int_{x_0}^{\infty}
\frac{{\rm d}z}{z\overline{C}\ln(z)}
\myr{12}=\infty~,
\end{equs}
which completes the proof.
\end{proof}

\bibliographystyle{gvb}
\markboth{\sc \refname}{\sc \refname}
\bibliography{refs,main,dirkbib}

\end{document}